\documentclass[a4paper,12pt]{article}

\usepackage{amssymb,amsmath,amsfonts,amsthm,geometry,graphicx,caption,xcolor,setspace,array,mathtools}
\usepackage[authoryear,round]{natbib}
\usepackage[hidelinks]{hyperref}
\allowdisplaybreaks[2]
\usepackage[inline]{enumitem}
\usepackage{tikz}
\usepackage{float}

\theoremstyle{plain}
\newtheorem{thm}{Theorem}
\newtheorem{lem}{Lemma}

\newtheorem{prop}{Proposition}

\theoremstyle{definition}

\newtheorem{emp}{Example}
\newtheorem*{prf}{Proof}
\newtheorem*{prfofthm1}{Proof of Theorem~\ref{thm:lattice}}

\DeclareMathOperator{\CR}{\textit{CR}}
\DeclareMathOperator{\TTC}{\textit{TTC}}
\DeclareMathOperator{\NE}{\mathbf{NE}}

\newcolumntype{L}[1]{>{\raggedright\let\newline\\arraybackslash\hspace{0pt}}m{#1}}
\newcolumntype{C}[1]{>{\centering\let\newline\\arraybackslash\hspace{0pt}}m{#1}}
\newcolumntype{R}[1]{>{\raggedleft\let\newline\\arraybackslash\hspace{0pt}}m{#1}}

\begin{document}

\title{The lattice structure of the crawler\thanks{This work was supported by JST ERATO (JPMJER2301).
}}

\author{Hinata Kurashita\thanks{Department of Industrial Engineering and Economics, Graduate School of Engineering, Institute of Science Tokyo, 2-12-1 Ookayama, Meguro-ku, Tokyo 152-8552, Japan. e-mail: kurashita.h.aa@m.titech.ac.jp.}}

\date{\today}

\maketitle

\begin{abstract}
    We study object reallocation problems with single-peaked preferences. On the unrestricted domain of strict preferences, top trading cycles (TTC) is the unique rule that satisfies \emph{efficiency}, \emph{individual rationality}, and \emph{strategy-proofness}. 
    On the single-peaked preference domain, the crawler introduced by \citet{bade2019matching} satisfies the same properties.
    We show that the Nash equilibrium allocations of the crawler’s direct
    revelation game form a complete lattice under weak Pareto dominance. The truthful allocation is the greatest element of this lattice, while the initial endowment allocation is its least element.
\end{abstract}

\noindent\textbf{Keywords:} \\
Lattice; Object reallocation problems; Single-peaked preferences; The crawler. 

\noindent\textbf{JEL Codes:} \\
C72; C78; D82.
\doublespacing
\setcounter{page}{1}
\pagenumbering{arabic}

\section{Introduction}\label{sec:introduction}
We consider the reallocation of indivisible objects without monetary transfers.
Each agent initially owns one object and receives one object after reallocation.
Examples include users exchanging reserved time slots for shared facilities or equipment, workers exchanging assigned shifts of equal length, and residents exchanging assigned parking spaces along a street.
Reservations and shifts can be ordered by starting time, while parking spaces can be ordered by location.
When each agent has an ideal time or location and strictly prefers alternatives closer to it on the same side, preferences are \emph{single-peaked} with respect to this order.
The initial assignments may not reflect agents' preferences.
In such cases, exchanges may improve their assignments.

For given initial endowments, a \emph{rule} selects an allocation for each preference profile.
An allocation assigns one object to each agent and each object to exactly one agent.
We consider three standard properties of rules.
\emph{Efficiency} requires that no other allocation make some agent strictly better off without making another agent worse off.
\emph{Individual rationality} requires that each agent weakly prefer their assignment to their initial endowment.
\emph{Strategy-proofness} requires that each agent weakly prefer the assignment obtained by reporting truthfully to that obtained by any misreport, regardless of the other agents' reports.

\emph{Top trading cycles (TTC)}, introduced by \citet{shapley1974cores}, satisfies these three properties.
At each step, each remaining agent points to the owner of their most preferred remaining object.
In each cycle, each agent receives the object owned by the agent to whom they point.
These agents leave with their assigned objects, and the procedure repeats until no agent remains.
\citet{ma1994strategy} shows that, on the unrestricted domain of strict preferences, TTC is the unique rule satisfying \emph{efficiency}, \emph{individual rationality}, and \emph{strategy-proofness}.

This uniqueness result does not hold on the single-peaked preference domain.
The \emph{crawler}, introduced by \citet{bade2019matching}, also satisfies these three properties on this domain.\footnote{For further studies of the crawler, see \citet{tamura2022object} and \citet{tamura2022crawler}.}
The crawler considers agents in the order of their current holdings.
The first agent whose most preferred remaining object is weakly to the left of their holding receives that object.
Consider the other agents whose holdings are weakly to the right of the assigned object and strictly to the left of the selected agent's holding.
Each of these agents receives the next remaining object to their right as their new holding.
The selected agent and object leave, and the procedure repeats.

Although the crawler and TTC share these three properties, they may select different allocations.
What, then, distinguishes the two rules on the single-peaked preference domain?
\citet{bade2019matching} provides one answer: the crawler admits an \emph{obviously strategy-proof implementation}, whereas TTC does not in general.\footnote{On the unrestricted domain of strict preferences, TTC does not admit an obviously strategy-proof implementation when
there are at least three agents; see \citet{li2017obviously}.
On the single-peaked preference domain, however, it admits such an implementation when there are at most three agents; see \citet{bade2019matching}.}
\emph{Obvious strategy-proofness}, introduced by \citet{li2017obviously}, allows agents to recognize why they should follow the prescribed strategy without detailed reasoning about the behavior of others.\footnote{In an extensive-form mechanism, a strategy is \emph{obviously dominant} if, for each alternative strategy and each information set where the two strategies first prescribe different actions, the worst possible outcome under the prescribed strategy is weakly preferred to the best possible outcome under the alternative, conditional on that information set.
An obviously strategy-proof implementation assigns such a strategy to each agent for each preference and implements the rule when agents follow these strategies.}
Thus, this distinction concerns an incentive requirement stronger than \emph{strategy-proofness}.

In contrast, we study the Nash equilibrium allocations of the crawler's direct revelation games, in which agents simultaneously report their preferences.
\emph{Strategy-proofness} makes the truthful report profile a dominant-strategy equilibrium.
Nash equilibrium is a weaker concept: no agent can gain by changing their report alone.

We find a surprising property of the equilibrium allocations.
For each true preference profile, the crawler's Nash equilibrium allocations form a complete lattice under weak Pareto dominance (Theorem \ref{thm:lattice}).
Here, an allocation weakly Pareto dominates another if each agent weakly prefers their assigned object in the former to that in the latter.
Its greatest element is the truthful allocation, and its least element is the initial endowment allocation.
Thus, the truthful allocation weakly Pareto dominates each equilibrium allocation, and each equilibrium allocation weakly Pareto dominates the initial endowment allocation.
Moreover, for any two equilibrium allocations, assigning each agent the preferred object among their two assigned objects yields another equilibrium allocation, their join.
Choosing the less preferred object likewise yields their meet.
By contrast, these properties need not hold for TTC (Example \ref{emp:ttc-nonlattice}).
The structure of Nash equilibrium allocations therefore provides a further distinction between the two rules.

Interestingly, lattice structure and dominance of the truthful allocation have also been shown in a different model.
In the model studied by \citet{sprumont1991division}, a fixed amount of an infinitely divisible object is allocated among agents with single-peaked preferences over their individual quantities.
For this model, \citet{bochet2020dominance} show that, under each rule satisfying \emph{efficiency}, \emph{strategy-proofness}, and \emph{replacement monotonicity}, the Nash equilibrium allocations form a complete lattice whose greatest element is the truthful allocation.\footnote{\emph{Replacement monotonicity} requires that, when one agent changes their report and receives a weakly larger amount, each other agent receives a weakly smaller amount.
\citet{barbera1997strategy} show that these three properties characterize the class sequential allotment rules.}
For the uniform rule,\footnote{The uniform rule belongs to the class of sequential allotment rules.} the least element is the equal division allocation.\footnote{See also the online appendix to
\citet{bochet2020dominance}.}
The two models thus have a similar structure of equilibrium allocations, but this paper studies the reallocation of indivisible objects, whereas \citet{bochet2020dominance} study the allocation of an infinitely divisible object.

The rest of the paper is organized as follows.
Section~\ref{sec:preliminaries} introduces the model and the relevant definitions.
Section~\ref{sec:result} states the main result.
Section~\ref{sec:conclusion} provides concluding remarks.
The proofs are collected in the appendix.

\section{Preliminaries}\label{sec:preliminaries}
\subsection{Model}\label{subsec: model}
Let $N=\{1,\ldots,n\}$ be a finite set of agents and $O$ a finite set of objects, with $|O|=n$. 
Each agent $i\in N$ is initially endowed with an object $\omega_i\in O$, and no two agents own the same object. 
Let $\omega=(\omega_i)_{i\in N}$ denote the initial  endowment profile.
The set $O$ is equipped with a strict linear order $<$. 
Each agent $i\in N$ has a complete, transitive, and antisymmetric preference relation $R_i$ on $O$. 
Its strict part is denoted by $P_i$.\footnote{For $o,o'\in O$,
$o\mathbin{P_i}o'$ if and only if $o\mathbin{R_i}o'$ and $o\ne o'$.}
A preference $R_i$ is \textbf{single-peaked} if there exists $p_i\in O$ such that, for each $o,o'\in O$, $o'\mathbin{P_i}o$ if $o<o'\le p_i$, and $o\mathbin{P_i}o'$ if $p_i\le o<o'$.
Let $\mathcal{R}$ denote the set of all single-peaked preferences. 
A preference profile is $R=(R_i)_{i\in N}\in\mathcal{R}^n$.
For each $i\in N$, let $R_{-i}=(R_j)_{j\in N\setminus\{i\}}$, and for each $N'\subset  N$, let $R_{-N'}=(R_j)_{j\in N\setminus N'}$.

An \textbf{allocation} is a vector $x=(x_i)_{i\in N}\in O^n$ with $x_i\ne x_j$ for distinct $i,j\in N$.
Let $\mathcal{X}$ denote the set of all allocations.
For $i\in N$, $x_i$ denotes the object assigned to $i$.
For $o\in O$, $x_o$ denotes the agent assigned to $o$.

Throughout, $N$, $O$, $\omega$, and $<$ are fixed, with agents indexed so that $\omega_1<\omega_2<\cdots<\omega_n$.
A \textbf{rule} is a mapping $f\colon\mathcal{R}^n\to\mathcal{X}$.
For each $R\in\mathcal{R}^n$, $f(R)$ 
is the allocation selected by $f$, and $f_i(R)$ denotes the object assigned to agent $i$.

\subsection{Lattice structure}\label{subsec:lattice}
Fix $R\in\mathcal{R}^n$.
For $x,y\in\mathcal{X}$, $x$ \textbf{weakly Pareto dominates}
$y$, denoted by $x\succeq_R y$, if $x_iR_i y_i$ for each
$i\in N$.
The relation $\succeq_R$ is a partial order on
$\mathcal{X}$.\footnote{A partial order is a reflexive,
transitive, and antisymmetric binary relation.}

Let $\emptyset\ne\mathcal{Y}\subseteq\mathcal{X}$.
For $\mathcal{Y}'\subseteq\mathcal{Y}$, $z\in\mathcal{Y}$
is an upper bound of $\mathcal{Y}'$ if $z\succeq_R x$
for each $x\in\mathcal{Y}'$, and a lower bound if
$x\succeq_R z$ for each $x\in\mathcal{Y}'$.
The \textbf{join} $\bigvee\mathcal{Y}'$ is an upper bound
weakly Pareto dominated by each upper bound in $\mathcal{Y}$.
The \textbf{meet} $\bigwedge\mathcal{Y}'$ is a lower bound
that weakly Pareto dominates each lower bound in $\mathcal{Y}$.
These bounds are unique whenever they exist.

The partially ordered set $(\mathcal{Y},\succeq_R)$ is a
\textbf{lattice} if each pair has a join and a meet.
It is \textbf{complete} if each subset, including the empty set, has a join and a meet.
Its \textbf{greatest element} is an allocation in $\mathcal{Y}$
that weakly Pareto dominates each allocation in $\mathcal{Y}$.
Its \textbf{least element} is an allocation in $\mathcal{Y}$
weakly Pareto dominated by each allocation in $\mathcal{Y}$.

For $x,y\in O^n$, define $x\vee y,x\wedge y\in O^n$ as follows. For each $i\in N$,
\begin{equation*}
[x\vee y]_i=
\begin{cases}
x_i&\text{if $x_i\mathbin{R_i} y_i$},\\
y_i&\text{otherwise},
\end{cases}
\end{equation*}
and
\begin{equation*}
[x\wedge y]_i=
\begin{cases}
x_i&\text{if $y_i\mathbin{R_i} x_i$,}\\
y_i&\text{otherwise}.
\end{cases}
\end{equation*}
Note that these vectors need not belong to $\mathcal{Y}$,
even when $x,y\in\mathcal{Y}$.
If both vectors belong to $\mathcal{Y}$ for each $x,y\in\mathcal{Y}$, then
$\bigvee\{x,y\}=x\vee y$ and $\bigwedge\{x,y\}=x\wedge y$:
for any common upper bound $v$ and common lower bound $w$ in $\mathcal{Y}$,
$v_i\mathbin{R_i}[x\vee y]_i$ and $[x\wedge y]_i\mathbin{R_i}w_i$ for each $i\in N$.
Thus $(\mathcal{Y},\succeq_R)$ is a lattice.

\subsection{Top trading cycles, the crawler and their properties}\label{subsec:crawler}

\textbf{Top trading cycles (TTC)}, introduced by \citet{shapley1974cores}, is denoted by $\TTC\colon\mathcal{R}^n\to\mathcal{X}$. 
Fix a preference profile $R\in\mathcal{R}^n$. 
TTC assigns objects through the following procedure:
\begin{enumerate}
    \item[1.] At each step, each remaining agent $i$ points to the agent $j$ whose object is $i$'s most-preferred remaining object. At least one directed cycle exists.
    \item[2.] Each agent $i$ in a cycle receives $j$'s object, where $j$ is the agent to whom $i$ points.
    \item[3.] All agents in cycles leave with their assigned objects. 
\end{enumerate}
The procedure repeats until no agent remains, and yields $\TTC(R)\in\mathcal{X}$.

Next, we introduce the \textbf{crawler} of \citet{bade2019matching},
denoted by $\CR\colon\mathcal{R}^n\to\mathcal{X}$.
For a nonempty set $O'\subseteq O$, let $\max_{R_i}O'$ denote
agent $i$'s most preferred object in $O'$.
Fix a preference profile $R\in\mathcal{R}^n$.
The crawler assigns objects through the following procedure.

\paragraph{The crawler:}
At Step $k$, let $N^k$ and $O^k$ denote the sets of remaining agents and objects, and let $o_i^k\in O^k$ denote the object
currently held by agent $i\in N^k$.
Initially, $N^1=N$, $O^1=O$, and $o_i^1=\omega_i$ for each $i\in N$.
\begin{enumerate}
    \item[1.] At each step, consider the remaining agents in increasing order
of their current holdings.
Select the first agent $i$ whose most preferred remaining
object is weakly below their current holding:
$\max_{R_i}O^k\le o_i^k$.
Equivalently, $i$ is the unique remaining agent satisfying $\max_{R_i}O^k\le o_i^k$ and 
$o_j^k<\max_{R_j}O^k$ for each  $j\in N^k$ with $o_j^k<o_i^k$. Assign $o'=\max_{R_i}O^k$ to $i$ and remove both $i$ and $o'$: $\CR_i(R)=o'$, $N^{k+1}=N^k\setminus\{i\}$, and 
$O^{k+1}=O^k\setminus\{o'\}$.
\item[2.] For each remaining agent $j$, define their next holding by
\begin{equation*}
o_j^{k+1}=
\begin{cases}
\min_{<}\{o\in O^{k+1}\colon o_j^k<o\}
 &\text{if }o'\le o_j^k<o_i^k,\\
o_j^k
 &\text{otherwise}.
\end{cases}
\end{equation*}
Here the minimum is taken with respect to $<$.
\item[3. ]If agents remain, begin the next step with the agent holding
the smallest remaining object.
\end{enumerate}
The procedure terminates after $n$ steps and yields $\CR(R)\in\mathcal{X}$.

Next, we use the following properties of rules.
\paragraph{Efficiency:}
A rule $f$ is \textit{efficient} if, for each $R\in\mathcal{R}^n$, there does not exist $x\in\mathcal{X}$ such that $x\succeq_Rf(R)$ and 
$x_j\mathbin{P_j}f_j(R)$ for some $j\in N$. 
\paragraph{Individual rationality:}
A rule $f$ is \textit{individually rational} if, for each
$R\in\mathcal{R}^n$ and each $i\in N$,
$f_i(R)\mathbin{R_i}\omega_i$.\footnote{\textit{Individual rationality} is also referred to as
\textit{the endowments lower bounds}.}
\paragraph{Strategy-proofness:}
A rule $f$ is \textit{strategy-proof} if, for each 
$R\in\mathcal{R}^n$, each $i\in N$, and each $R_i'\in\mathcal{R}$, $f_i(R)\mathbin{R_i}f_i(R_i',R_{-i})$.

\mbox{}\par
TTC satisfies \emph{efficiency}, \emph{individual rationality}, and \emph{strategy-proofness} (\citet{shapley1974cores,roth1982incentive}). On the unrestricted domain of strict preferences, it is the unique rule satisfying these properties (\citet{ma1994strategy}). However, on the single-peaked preference domain, the crawler also satisfies these properties (\citet{bade2019matching}).

To analyze the crawler, we consider an additional property, \emph{non-bossiness}, which requires that, if an agent changes their report without changing their own assigned object, each other agent's assigned object also remains unchanged.
\paragraph{Non-bossiness:}
A rule $f$ is \textit{non-bossy} if, for each 
$R\in\mathcal{R}^n$, each $i\in N$, and  each $R_i'\in\mathcal{R}$, $f_i(R_i',R_{-i})=f_i(R)$
implies $f(R_i',R_{-i})=f(R)$.

\mbox{}\par
\citet{tamura2022object} shows that the crawler is \emph{non-bossy}.

\subsection{The direct revelation game and Nash equilibrium}\label{subsec: ne}

Given a rule $f$, a \textbf{direct revelation mechanism} is a pair
$\Gamma^f=(\mathcal{R}^n,f)$.
For a true preference profile $R^0\in\mathcal{R}^n$,
the \textbf{direct revelation game} of $f$
is $(\Gamma^f,R^0)$.
A profile $R\in\mathcal{R}^n$ is a pure strategy \textbf{Nash equilibrium}
of $(\Gamma^f,R^0)$ if, for each $i\in N$
and each $R_i'\in\mathcal{R}$, $f_i(R)\mathbin{R_i^0}f_i(R_i',R_{-i})$.
We focus on the pure strategy Nash equilibria of $(\Gamma^f,R^0)$. 
Let $\mathbf{NE}(\Gamma^f,R^0)$ denote the set of all such profiles, and let
$\mathcal{X}\bigl(\NE(\Gamma^f,R^0)\bigr)$
denote the set of Nash equilibrium allocations:
$\mathcal{X}\bigl(\NE(\Gamma^f,R^0)\bigr)
=\{f(R):R\in\NE(\Gamma^f,R^0)\}$.

Hereafter, we analyze the direct revelation game of the crawler $(\Gamma^{\CR},R^0)$.

\section{Result}\label{sec:result}
We begin with an example illustrating the lattice structure of the crawler's Nash equilibrium allocations.
\begin{emp}[\textit{The lattice structure of the crawler}]\label{emp: lattice}
    Let $N=\{1,2,3,4\}$ and $\omega=(o_1,o_2,o_3,o_4)$, where $o_1<o_2<o_3<o_4$. 
    Consider the following preference profile $R^0$: 
    \begin{align*}
        R^0_1&\colon o_2,~ o_1,~ o_3,~ o_4,\\
        R^0_2&\colon o_1,~ o_2,~ o_3,~ o_4,\\
        R^0_3&\colon o_4,~ o_3,~ o_2,~ o_1,\\
        R^0_4&\colon o_3,~ o_4,~ o_2,~ o_1.
    \end{align*}
Then $\mathcal{X}\bigl(\NE(\Gamma^{\CR},R^0)\bigr)=\{(o_1,o_2,o_3,o_4),(o_2,o_1,o_3,o_4),(o_1,o_2,o_4,o_3),(o_2,o_1,o_4,o_3)\}$, where $\omega=(o_1,o_2,o_3,o_4)$ and $\CR(R^0)=(o_2,o_1,o_4,o_3)$.
Under the order of weak Pareto dominance $\succeq_{R^0}$, $\mathcal{X}\bigl(\NE(\Gamma^{\CR},R^0)\bigr)$ is a  complete lattice whose Hasse diagram is shown in Figure \ref{fig:crawler-ne-diamond}.
\begin{figure}[H]
\centering
\begin{tikzpicture}[
    x=1cm, y=1cm,
    every node/.style={align=center, inner sep=4pt},
    every path/.style={semithick}
]

\node (bottom) at (0,0)
    {$\omega$};

\node (left) at (-2.6,1.6)
    {$(o_2,o_1,o_3,o_4)$};

\node (right) at (2.6,1.6)
    {$(o_1,o_2,o_4,o_3)$};

\node (top) at (0,3.2)
    {$\textit{CR}(R^0)$};

\draw (bottom) -- (left);
\draw (bottom) -- (right);
\draw (left) -- (top);
\draw (right) -- (top);

\end{tikzpicture}
\caption{Hasse diagram for Example \ref{emp: lattice}.}
\label{fig:crawler-ne-diamond}
\end{figure}
\end{emp}

The following theorem establishes this lattice structure for
each true preference profile.

\begin{thm}\label{thm:lattice}
For each true preference profile $R^0\in\mathcal{R}^n$,
$\bigl(\mathcal{X}\bigl(\NE(\Gamma^{\CR},R^0)\bigr),\succeq_{R^0}\!\bigr)$ is a complete lattice,
whose greatest element is the truthful allocation and least element is the initial endowment allocation.
Thus,
\begin{equation*}
    \bigvee \mathcal{X}\bigl(\NE(\Gamma^{\CR},R^0)\bigr)=\CR(R^0),
\qquad
\bigwedge \mathcal{X}\bigl(\NE(\Gamma^{\CR},R^0)\bigr)=\omega.
\end{equation*}
\end{thm}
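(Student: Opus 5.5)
The proof reduces to three claims, each about Nash equilibrium allocations of the crawler at $R^0$, together with the lattice facts recorded in Section~\ref{subsec:lattice}.
\begin{enumerate}
\item[(a)] Every equilibrium allocation $x$ satisfies $\CR(R^0)\succeq_{R^0}x\succeq_{R^0}\omega$, and both $\omega$ and $\CR(R^0)$ are equilibrium allocations.
\item[(b)] The set of equilibrium allocations is closed under $\vee$ and $\wedge$.
\item[(c)] Since the set is finite and nonempty, (b) makes it a lattice. Completeness then follows because a finite lattice with a greatest and a least element is complete; the empty set has join $\omega$ and meet $\CR(R^0)$.
\end{enumerate}

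\textbf{Step 1: bounds.} If $R$ is a Nash equilibrium with $x=\CR(R)$, agent $i$ can deviate to $R_i'$ with peak $\omega_i$. By \emph{individual rationality} applied at $R_i'$, agent $i$ then receives $\omega_i$, so $x_iR^0_i\omega_i$. Hence $x\succeq_{R^0}\omega$.

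For the upper bound I would compare with the truthful allocation by switching agents to truth one at a time. Each switch weakly improves the switching agent by \emph{strategy-proofness}. The hard part is that a switch can hurt the others. The key lemma I would aim for is that at an equilibrium every $i$ already obtains $\max_{R_i^0}$ over the set of objects $i$ could obtain by any report given $R_{-i}$; a Bade-style ``option set'' characterisation of the crawler should give this.

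Then I would argue step by step inside the crawler run at $R^0$. In the order in which agents are served, show that each agent $i$ gets $\CR_i(R^0)\,R^0_i\,x_i$. The idea is to show that the truthful run never makes available to $i$ fewer good objects than the equilibrium run does. The route is an induction on crawler steps, exploiting that objects move only rightward within the crawling segment.

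Membership of $\omega$: let every agent report the peak at $\omega_i$, with the preference decreasing away from it. The crawler then gives $\omega$. A unilateral deviation by $i$ can only give $i$ an object another agent is willing to release, but every other agent demands their own endowment. So $i$ keeps $\omega_i$, and this profile is an equilibrium. Membership of $\CR(R^0)$ follows from \emph{strategy-proofness}.

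\textbf{Step 2: closure under join and meet.} Given equilibria $R,R'$ with allocations $x,y$, I would construct an equilibrium profile yielding $x\vee y$. The candidate is to let each agent report a preference that truncates $R^0_i$ so that $[x\vee y]_i$ is top among the objects acceptable in their option set. The canonical choice is the single-peaked preference that ranks objects exactly as $R_i^0$ restricted around $[x\vee y]_i$.

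First I would verify that $x\vee y$ is an allocation, i.e.\ injective. Suppose $[x\vee y]_i=[x\vee y]_j=o$ with $i\neq j$, so $o=x_i=y_j$, say. Then $i$ weakly prefers $x_i$ to $y_i$ and $j$ weakly prefers $y_j$ to $x_j$. I would derive a contradiction from single-peakedness plus the fact that both $x$ and $y$ lie in the interval lattice between $\omega$ and $\CR(R^0)$. This is the crawler analogue of the Knuth/Gale--Shapley decomposition argument, with the crawling structure meaning that objects move in monotone chains.

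Then I would check the equilibrium property: any deviation is bounded by the option set, which I would show is contained in the union of the option sets at $R$ and $R'$. \emph{Non-bossiness} (\citet{tamura2022object}) is used to keep the others' allocation fixed when one agent's report changes without changing their own object. The meet is handled symmetrically, with agents reporting truncations at the lower object.

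\textbf{Main obstacle.} I expect the hardest part to be showing that $x\vee y$ and $x\wedge y$ are feasible and implementable, since the crawler's chains of shifts interact nonlocally. I would first try to obtain a structural characterisation of equilibrium allocations: $x$ is an equilibrium allocation iff it is individually rational and obtainable as a union of ``crawler blocks'' of the truthful run in which some trades are cancelled. With that characterisation, join and meet correspond to union and intersection of executed trades, and closure becomes immediate.
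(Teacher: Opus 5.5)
Your outer skeleton (bounds, closure under $\vee$ and $\wedge$, finiteness) matches the paper's. The lower bound and the memberships of $\omega$ and $\CR(R^0)$ are fine; your deviation to a report with peak $\omega_i$ is an even shorter route to $x\succeq_{R^0}\omega$.

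The gap is in Step~2, which carries the whole theorem. You plan to get injectivity of $x\vee y$ from single-peakedness plus $\omega\preceq_{R^0}x,y\preceq_{R^0}\CR(R^0)$, but these facts are not enough. Take $n=3$, $\omega=(o_1,o_2,o_3)$, $R^0_1\colon o_2,o_3,o_1$, $R^0_2\colon o_3,o_2,o_1$, $R^0_3\colon o_1,o_2,o_3$. The crawler gives every agent their peak, $\CR(R^0)=(o_2,o_3,o_1)$. Then $x=(o_3,o_2,o_1)$ and $y=(o_1,o_3,o_2)$ are individually rational, so both lie in your interval, yet $x\vee y=(o_3,o_3,o_1)$. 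Here $x$ is not an equilibrium allocation: with agents 1 and 3 reporting peaks $x_1$ and $x_3$, agent~2 obtains $o_3$ by reporting peak $o_3$. So feasibility needs structure specific to equilibria, and your proposal has no mechanism for producing it.

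The paper supplies three missing ingredients:
\begin{enumerate}
\item[(i)] By \emph{strategy-proofness} and \emph{non-bossiness}, being an equilibrium is a property of the allocation alone: $x_i\mathbin{R^0_i}o$ for every $o\in O_i(x)$, where $O_i(x)$ is the set of objects $i$ can obtain against the canonical reports $R^x_{-i}$ (Lemma~\ref{lem: NE}). Your ``key lemma'' is just the definition of Nash equilibrium at $R_{-i}$; the real work is the transfer to $R^x_{-i}$. This transfer also tells you which profile supports $x\vee y$, namely $R^{x\vee y}$, so no truncation construction is needed.
\item[(ii)] An explicit computation shows that the agents $i$ able to obtain $o$ against $R^x_{-i}$ form an endowment interval, $[j,U]_N$ or $[L,j]_N$ with $j=x_o$ (Lemma~\ref{lem: N_x(o)=}).
\item[(iii)] A counting argument on these intervals shows that $y_i=x_j=o$ implies $o\in O_i(x)$ or $o\in O_j(y)$ (Lemma~\ref{lem: 7, x or y}). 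This forces every cycle of $i\mapsto x_{y_i}$ to be uniformly oriented (Lemma~\ref{lem: cycle}), which is exactly what makes $x\vee y$ and $x\wedge y$ allocations.
\end{enumerate}

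There are two further problems. First, the meet is not symmetric to the join. Since $[x\wedge y]_i$ is the worse of $x_i,y_i$, bounding $O_i(x\wedge y)$ by $O_i(x)\cup O_i(y)$ is not enough. You need $O_i(x\wedge y)\subseteq O_i(x)\cap O_i(y)$, which the paper derives from a separate monotonicity property: $N_x(o)\subseteq N_y(o)$ whenever $x_o\in N_y(o)$ (Lemma~\ref{lem: 10 N_x,N_y}).

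Second, your upper bound is only a hope (``the truthful run never makes available fewer good objects''), and you then feed it into Step~2; the dependence should run the other way. Once join closure is known, $\CR(R^0)\vee x$ is an allocation that weakly Pareto dominates the efficient allocation $\CR(R^0)$, hence equals it. So $\CR(R^0)\succeq_{R^0}x$ follows without any analysis of the crawler run, and the speculative ``crawler blocks'' characterisation is not needed.
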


The same conclusions need not hold for TTC, although it also
satisfies \emph{efficiency}, \emph{individual rationality}, and \emph{strategy-proofness}.
The following example illustrates this difference.

\begin{emp}[\textit{Failure of the lattice structure under TTC}]
\label{emp:ttc-nonlattice}
Let $N=\{1,2,3\}$ and
$\omega=(o_1,o_2,o_3)$, where $o_1<o_2<o_3$.
Consider the following preference profile $R^0$:
\begin{align*}
    R_1^0&\colon o_3,~o_2,~o_1,\\
    R_2^0&\colon o_3,~o_2,~o_1,\\
    R_3^0&\colon o_2,~o_1,~o_3.
\end{align*}
Then
$\mathcal{X}\bigl(\NE(\Gamma^{\TTC},R^0)\bigr)
=
\bigl\{
(o_1,o_2,o_3),~
(o_1,o_3,o_2),~
(o_3,o_2,o_1)
\bigr\}$,
where $\omega=(o_1,o_2,o_3)$ and $\TTC(R^0)=(o_1,o_3,o_2)$.
Under the order of weak Pareto dominance $\succeq_{R^0}$, $\mathcal{X}\bigl(\NE(\Gamma^{\TTC},R^0)\bigr)$ is not a lattice. 
The allocations $\TTC(R^0)$ and $(o_3,o_2,o_1)$ have no common
upper bound in $\mathcal{X}\bigl(\NE(\Gamma^{\TTC},R^0)\bigr)$.
Hence, their join does not exist, and this set is not a lattice.
Moreover, the truthful allocation does not weakly Pareto dominate
$(o_3,o_2,o_1)$.
Its Hasse diagram is shown in Figure \ref{fig:ttc-ne-nonlattice}.

\begin{figure}[H]
\centering
\begin{tikzpicture}[
    x=1cm, y=1cm,
    every node/.style={align=center, inner sep=4pt},
    every path/.style={semithick}
]
\node (bottom) at (0,0)
    {$\omega$};

\node (left) at (-2.6,1.6)
    {$\textit{TTC}(R^0)$};

\node (right) at (2.6,1.6)
    {$(o_3,o_2,o_1)$};

\draw (bottom) -- (left);
\draw (bottom) -- (right);
\end{tikzpicture}
\caption{Hasse diagram for Example \ref{emp:ttc-nonlattice}.}
\label{fig:ttc-ne-nonlattice}
\end{figure}
\end{emp}

\section{Concluding remarks}\label{sec:conclusion}
This paper studies the Nash equilibrium allocations of the crawler in object reallocation problems with single-peaked preferences.
We show that these allocations form a complete lattice under weak Pareto dominance, with the truthful allocation as its greatest element and the initial endowment allocation as its least element.
Thus, the truthful allocation weakly Pareto dominates each equilibrium allocation.
The same conclusions need not hold for TTC.

An important direction for future research is to examine the relation between lattice structure and obviously strategy-proof implementation.
\citet{bochet2020dominance} establish the lattice structure of Nash equilibrium allocations for sequential allotment rules in infinitely divisible obeect allocation problems.
\citet{arribillaga2023all} show that sequential allotment rules  admit obviously strategy-proof implementations.\footnote{They study a discrete division problem to avoid technical
difficulties.}
The crawler also admits such an implementation \citep{bade2019matching}.
These results motivate a comparison of the conditions that yield lattice structure and those that permit obviously strategy-proof implementation.

However, obviously strategy-proof implementability alone does not ensure lattice structure.
On the single-peaked preference domain, TTC admits an obviously strategy-proof implementation when there are at most three agents, but Example~\ref{emp:ttc-nonlattice} shows that its Nash equilibrium allocations need not form a lattice.
Our main direction for future research is therefore to explore the relation between the lattice structure of Nash equilibrium allocations and obviously strategy-proof implementability.

\appendix
\section*{Appendix: Proofs}\label{app:proofs}
We first establish two preliminary lemmas.

\begin{lem}\label{lem: single-peaked}
Fix $R_i\in\mathcal{R}$ and $o,o',o''\in O$ with $o<o'<o''$. Then $o\mathbin{P_i}o'$ and $o''\mathbin{P_i}o'$ cannot both hold.  
\end{lem}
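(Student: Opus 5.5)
The argument is a direct contradiction from the definition of single-peakedness. Let $p_i$ be the peak of $R_i$, and suppose, toward a contradiction, that both $o\mathbin{P_i}o'$ and $o''\mathbin{P_i}o'$ hold. I will split into cases according to where $p_i$ lies relative to the middle object $o'$. By the linearity of $<$, exactly one of $p_i\le o'$ or $o'<p_i$ holds.

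\emph{Case 1: $p_i\le o'$.} Then $p_i\le o'<o''$. The second clause of the definition, applied to the pair $o'<o''$ on the right of the peak, gives $o'\mathbin{P_i}o''$. This contradicts $o''\mathbin{P_i}o'$, because $P_i$ is asymmetric: $R_i$ is antisymmetric and $P_i$ is its strict part.

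\emph{Case 2: $o'<p_i$.} Then $o<o'<p_i$, so in particular $o<o'\le p_i$. The first clause of the definition gives $o'\mathbin{P_i}o$, which contradicts $o\mathbin{P_i}o'$, again by asymmetry of $P_i$.

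Both cases are impossible, so the two strict preferences cannot hold simultaneously. There is no real obstacle here. The only points needing care are that the two cases exhaust all possibilities and that the strict inequalities $o<o'$ and $o'<o''$ match the hypotheses of the two clauses of the definition.
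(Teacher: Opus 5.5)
Your proof is correct and follows essentially the same argument as the paper: split on where the peak $p_i$ lies relative to the middle object $o'$, apply the matching clause of single-peakedness, and conclude by asymmetry of $P_i$. The only difference is that you put the boundary case $p_i=o'$ in the branch that uses the right-of-peak clause, whereas the paper puts it in the branch that uses the left-of-peak clause; this makes no difference.
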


\begin{prf}
    If $o'\le p_i$, by single-peakedness, $o'\mathbin{P_i}o$. If $p_i< o'$, by single-peakedness, $o'\mathbin{P_i}o''$.
    At least one of these cases holds. Since $P_i$ is asymmetric, $o\mathbin{P_i}o'$ and $o''\mathbin{P_i}o'$ cannot both hold.  \qed
\end{prf}

For each $i\in N$ and $o\in O$, fix $R_i^o\in\mathcal{R}$ such that
$o\mathbin{R_i^o}o'$ for each $o'\in O$.

\begin{lem}\label{lem: CR=h}
    Fix $i\in N$, $o\in O$, and $R_{-i}\in\mathcal{R}^{n-1}$. There exists $R'_i\in\mathcal{R}$ such that $\CR_i(R'_i,R_{-i})=o$ if and only if $\CR_i(R^o_i,R_{-i})=o$. 
\end{lem}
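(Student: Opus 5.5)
The "if" direction is immediate: take $R'_i=R^o_i$. For the "only if" direction, I would suppose $\CR_i(R'_i,R_{-i})=o$ for some $R'_i\in\mathcal{R}$ and show $\CR_i(R^o_i,R_{-i})=o$. The natural route is to construct a preference $\hat R_i$ that ranks $o$ first and under which $o$ is still assigned. By \emph{strategy-proofness} of the crawler applied at the profile $(\hat R_i,R_{-i})$, agent $i$ with true preference $\hat R_i$ cannot gain by reporting $R'_i$. Since $o$ is $\hat R_i$-top, $\CR_i(\hat R_i,R_{-i})\mathbin{\hat R_i}\CR_i(R'_i,R_{-i})=o$ forces $\CR_i(\hat R_i,R_{-i})=o$.

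So the plan is to note that this strategy-proofness argument works for any single-peaked preference whose peak is $o$. In particular it applies to $R^o_i$ itself, whose top is $o$ by definition ($o\mathbin{R^o_i}o'$ for all $o'$). Applying \emph{strategy-proofness} with true preference $R^o_i$ and misreport $R'_i$ gives
\begin{equation*}
\CR_i(R^o_i,R_{-i})\mathbin{R^o_i}\CR_i(R'_i,R_{-i})=o .
\end{equation*}
Because $o$ is the unique $R^o_i$-maximal object (preferences are antisymmetric), this yields $\CR_i(R^o_i,R_{-i})=o$.

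Hence the proof reduces to a single invocation of the strategy-proofness result of \citet{bade2019matching}, quoted in Section~\ref{subsec:crawler}. I do not expect a genuine obstacle. The only point to check is that $R^o_i$ is a legitimate single-peaked preference, which holds because it is fixed in $\mathcal{R}$ with peak $o$, and that strategy-proofness is stated for arbitrary true preferences in $\mathcal{R}$ and arbitrary misreports, which it is. Lemma~\ref{lem: single-peaked} and non-bossiness are not needed here; I expect them to be used later, in the join/meet closure arguments for Theorem~\ref{thm:lattice}.
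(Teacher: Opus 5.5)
Your proposal is correct and matches the paper's proof: the paper also takes $R'_i=R_i^o$ for the ``if'' direction. For the ``only if'' direction it applies \emph{strategy-proofness} with true preference $R_i^o$ and misreport $R'_i$, then uses that $o$ is the unique $R_i^o$-top object to conclude $\CR_i(R_i^o,R_{-i})=o$.
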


\begin{prf}
    Suppose that there exists $R'_i\in\mathcal{R}$ such that $\CR_i(R'_i,R_{-i})=o$.
    By \emph{strategy-proofness},
    \begin{equation*}
        \CR_i(R_i^o,R_{-i})\mathbin{R_i^o}\CR_i(R'_i,R_{-i})=o.
    \end{equation*}
    Since $o$ is the most preferred object under $R_i^o$, $\CR_i(R_i^o,R_{-i})=o$. 
    Conversely, choose $R'_i=R_i^o$.\qed
\end{prf}

For $x\in\mathcal{X}$, let $R^x=(R_i^{x_i})_{i\in N}$.
Since $x$ assigns each agent their most preferred object under $R^x$,
it weakly Pareto dominates each allocation at $R^x$.
By \emph{efficiency}, $\CR(R^x)=x$. 
Define $O_i(x)=\{\CR_i(R'_i,R^x_{-i})\colon R'_i\in\mathcal{R}\}$ and $N_x(o)=\{i\in N\colon o\in O_i(x)\}$.
Thus $O_i(x)$ consists of the objects $i$ can obtain by changing only their report while the other
reports remain $R^x_{-i}$.
The set $N_x(o)$ consists of agents who can obtain $o$ in this sense.

\begin{lem}\label{lem: NE}
 Fix a true preference profile $R^0\in\mathcal{R}^n$ and $x\in\mathcal{X}$. Then $x\in\mathcal{X}\bigl(\NE(\Gamma^{\CR},R^0)\bigr)$ if and only if $x_i\mathbin{R_i^0} o$ for each $i\in N$ and each $o\in O_i(x)$. 
\end{lem}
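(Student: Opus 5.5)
We prove the two directions separately, in each case reducing to the case where the reports are the "peak-only" profile $R^x$. The tools are Lemma~\ref{lem: CR=h}, \emph{non-bossiness}, and the identity $\CR(R^x)=x$ established just before the statement.

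\emph{Sufficiency.} Suppose $x_i\mathbin{R_i^0}o$ for each $i\in N$ and each $o\in O_i(x)$. We show that $R^x$ itself is a Nash equilibrium of $(\Gamma^{\CR},R^0)$. Since $\CR(R^x)=x$, agent $i$ receives $x_i$ under $R^x$. Any unilateral deviation $R_i'$ yields $\CR_i(R_i',R^x_{-i})\in O_i(x)$ by definition of $O_i(x)$. By hypothesis, $x_i\mathbin{R_i^0}\CR_i(R_i',R^x_{-i})$, so no deviation is profitable. Hence $R^x\in\NE(\Gamma^{\CR},R^0)$ and $x=\CR(R^x)\in\mathcal{X}\bigl(\NE(\Gamma^{\CR},R^0)\bigr)$.

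\emph{Necessity.} Suppose $x=\CR(R)$ for some $R\in\NE(\Gamma^{\CR},R^0)$. The key step is to show that $\CR(R^x)$ gives each agent the same deviation opportunities as $R$, namely
\[
O_i(x)=\{\CR_i(R_i',R_{-i})\colon R_i'\in\mathcal{R}\}\quad\text{for each } i\in N.
\]
With this identity, the Nash condition at $R$ gives exactly $x_i\mathbin{R_i^0}o$ for every $o\in O_i(x)$.

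To prove the identity, we replace agents' reports one at a time: $R\to(R^{x_1}_1,R_{-1})\to(R^{x_1}_1,R^{x_2}_2,R_{-\{1,2\}})\to\cdots\to R^x$. We maintain the invariant that the allocation stays $x$ at every stage.

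Suppose the current profile $Q$ has $\CR(Q)=x$, and we replace $Q_j$ by $R_j^{x_j}$. Since agent $j$ can obtain $x_j$ under $Q_{-j}$, Lemma~\ref{lem: CR=h} gives $\CR_j(R_j^{x_j},Q_{-j})=x_j$. \emph{Non-bossiness} then gives $\CR(R_j^{x_j},Q_{-j})=x$, which preserves the invariant.

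The main obstacle is showing that agent $i$'s option set $\{\CR_i(R_i',Q_{-i})\colon R_i'\in\mathcal{R}\}$ is unchanged when some other agent $j$ switches to $R_j^{x_j}$. I would prove this invariance lemma directly from the crawler procedure. Agent $i$ obtains $o$ by reporting $R_i^o$, by Lemma~\ref{lem: CR=h}. Under $(R_i^o,Q_{-i})$, agent $j$ either still receives $x_j$ or not. If $j$'s assignment is unchanged, non-bossiness-type arguments show that $j$'s report only matters through $x_j$.

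Failing a clean general argument, I would instead use the monotonic structure of the crawler. An agent's obtainable set is an interval-like set determined by which objects the others "block". An agent $j$ whose truthful report $Q_j$ yields $x_j$ blocks the same objects as $R_j^{x_j}$, because Lemma~\ref{lem: single-peaked} limits how single-peaked reports can differ below and above $x_j$.

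Once the invariance lemma holds, we chain it through all $n$ replacements to obtain the displayed identity, and the proof is complete.
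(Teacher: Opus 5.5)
Your sufficiency direction is correct and matches the paper's argument. The necessity direction has a genuine gap: the identity $O_i(x)=\{\CR_i(R_i',R_{-i})\colon R_i'\in\mathcal{R}\}$, and the ``invariance lemma'' you would use to prove it, are false.

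Take $N=\{1,2\}$, $\omega=(o_1,o_2)$ with $o_1<o_2$, and let both $R_1^0$ and $R_2^0$ rank $o_2$ first. At $R^0$, agent~1 is skipped, agent~2 takes $o_2$, and agent~1 keeps $o_1$. So $x=\CR(R^0)=\omega$, and $R^0$ is a Nash equilibrium by \emph{strategy-proofness}. Against $R_1^0$, agent~2 can obtain $o_1$ by reporting $R_2^{o_1}$: agent~2 is selected at Step~1 and agent~1 crawls to $o_2$. Hence agent~2's option set is $\{o_1,o_2\}$.

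Now make your first replacement $R_1^0\to R_1^{o_1}$, which does preserve the allocation. Agent~1 then takes $o_1$ at Step~1 whatever agent~2 reports, so $O_2(x)=\{o_2\}$. Switching $j$ to $R_j^{x_j}$ can therefore change $i$'s option set. Your heuristic that $R_j$ and $R_j^{x_j}$ ``block the same objects'' also fails here, since $R_1^0$ does not block $o_1$ but $R_1^{o_1}$ does. For the conclusion you would only need the inclusion $O_i(x)\subseteq\{\CR_i(R_i',R_{-i})\colon R_i'\in\mathcal{R}\}$. That inclusion is a substantive structural claim about the crawler, and your sketch does not establish it.

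The missing idea is that no comparison of option sets across profiles is needed. Replace $i$'s own report by the truthful $R_i^0$ instead of $R_i^{x_i}$. The Nash condition gives $x_i\mathbin{R_i^0}\CR_i(R_i^0,R_{-i})$ and \emph{strategy-proofness} gives the reverse, so $\CR_i(R_i^0,R_{-i})=x_i$. \emph{Non-bossiness} then gives $\CR(R_i^0,R_{-i})=x$. Your replacement chain, applied only to the agents $j\ne i$, yields $\CR(R_i^0,R^x_{-i})=x$. Now \emph{strategy-proofness} at the profile $(R_i^0,R^x_{-i})$ gives $x_i\mathbin{R_i^0}\CR_i(R_i',R^x_{-i})$ for every $R_i'\in\mathcal{R}$, which is exactly $x_i\mathbin{R_i^0}o$ for all $o\in O_i(x)$.

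This is the paper's proof. Because \emph{strategy-proofness} controls deviations at every profile, it suffices to reach a profile where $i$ reports truthfully, still receives $x_i$, and faces $R^x_{-i}$.
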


\begin{prf}
    Suppose $x\in\mathcal{X}\bigl(\NE(\Gamma^{\CR},R^0)\bigr)$.
    Choose $R'\in\NE(\Gamma^{\CR},R^0)$ with $\CR(R')=x$, and fix $i\in N$. 
    By the definition of Nash equilibrium, $x_i\mathbin{R_i^0}\CR_i(R_i^0,R'_{-i})$. 
    By \emph{strategy-proofness}, 
    $\CR_i(R_i^0,R'_{-i})\mathbin{R_i^0}x_i$.
    Thus $\CR_i(R_i^0,R'_{-i})=x_i$. 
    Since $\CR_i(R')=\CR_i(R_i^0,R'_{-i})=x_i$, by \emph{non-bossiness}, $\CR(R_i^0,R'_{-i})=x$.
    Fix $j\ne i$ and change $j$'s report in $(R_i^0,R'_{-i})$ to $R_j^{x_j}$.
    By \emph{strategy-proofness} under $R_j^{x_j}$,
    \begin{equation*}
    \CR_j(R_i^0,R_j^{x_j},R'_{-\{i,j\}})
    \mathbin{R_j^{x_j}}\CR_j(R_i^0,R'_{-i})=x_j.
    \end{equation*}
    Since $x_j$ is the most preferred object under $R_j^{x_j}$,
    $\CR_j(R_i^0,R_j^{x_j},R'_{-\{i,j\}})=x_j$.
    By \emph{non-bossiness}, $\CR(R_i^0,R_j^{x_j},R'_{-\{i,j\}})=x$.
    Repeating this argument for the remaining agents other than $i$, one at a time,
    gives $\CR(R_i^0,R_{-i}^x)=x$.
    By \emph{strategy-proofness} under $R_i^0$, for each $R_i'\in\mathcal{R}$, we have $x_i\mathbin{R_i^0}\CR_i(R'_i,R_{-i}^x)$. 
    By the definition of $O_i(x)$, this implies $x_i\mathbin{R_i^0} o$ for each $o\in O_i(x)$. 

    Conversely, suppose that $x_i\mathbin{R_i^0} o$ for each $i\in N$ and each $o\in O_i(x)$.
    Since $\CR(R^x)=x$ and $O_i(x)=\{\CR_i(R_i',R_{-i}^x)\colon R'_i\in\mathcal{R}\}$, by the assumption, for each $i\in N$ and each $R'_i\in \mathcal{R}$, we have
    \begin{equation*}
        \CR_i(R^x)=x_i\mathbin{R_i^0}\CR_i(R'_i,R_{-i}^x).
    \end{equation*}
    This implies that $R^x\in\NE(\Gamma^{\CR},R^0)$ and $x\in\mathcal{X}\bigl(\NE(\Gamma^{\CR},R^0)\bigr)$.\qed
\end{prf}

For $g,h\in O$ and $i,j\in N$, define 
\begin{align*}
    (g,h]_O&=\{o\in O\colon g<o\le h\}, &
    (g,h)_O&=\{o\in O\colon g<o< h\}, \\
    (i,j)_N&=\{\ell\in N\colon \omega_i<\omega_{\ell}<\omega_j\}, & 
    [i,j]_N&=\{\ell\in N\colon \omega_i\le\omega_{\ell}\le\omega_j\},\\
    (i,j]_N&=\{\ell\in N\colon \omega_i<\omega_\ell\le\omega_j\},&
    [i,j)_N&=\{\ell\in N\colon \omega_i\le\omega_\ell<\omega_j\}.
\end{align*}
Since $\omega\in\mathcal{X}$, for $\omega_i\le\omega_j$,
\begin{equation*}
\#(i,j)_N=\#(\omega_i,\omega_j)_O,\qquad
\#(i,j]_N=\#[i,j)_N=\#(\omega_i,\omega_j]_O.
\end{equation*}

\begin{lem}\label{lem: cr_i=o iff}
Fix $x\in\mathcal{X}$, $o\in O$, $j=x_o$, and $i\in N\setminus\{j\}$.
Let $\hat{R}=(R_i^o,R_{-i}^x)\in\mathcal{R}^n$. 
If $\omega_i<\omega_j$, $\CR_i(\hat{R})=o$ if and only if 
\begin{equation*}
    \#\{\ell\in(i,j)_N\colon x_{\ell}<o\}\ge \#(\omega_i,o]_O.
\end{equation*}
If $\omega_j<\omega_i$, $\CR_i(\hat{R})=o$ if and only if 
\begin{equation*}
    \#\{\ell\in(j,i)_N\colon x_{\ell}<o\}< \#(\omega_j,o]_O.
\end{equation*}
\end{lem}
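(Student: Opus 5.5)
The plan is to reduce the statement to a race between $i$ and $j$, and then to decide that race by a counting invariant on current holdings.

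\emph{Reduction.} At $\hat{R}=(R_i^o,R_{-i}^x)$, every agent $\ell\ne i$ has peak $x_\ell$, and $i$ has peak $o=x_j$. So only $i$ and $j$ share a peak. Whenever the crawler selects an agent whose peak is still available, that agent receives its peak. Hence every agent other than $i$ and $j$ receives its peak, and $\CR_i(\hat{R})=o$ if and only if $i$ is selected before $j$ while $o$ is still available.

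\emph{Invariant.} The crawler preserves the relative order of holdings, and each step removes one agent and one object. I would prove by induction on $k$ that the holdings $\{o^k_\ell\}_{\ell\in N^k}$ are exactly $O^k$, matched in increasing order: the $r$-th smallest remaining agent by $\omega$ holds the $r$-th smallest object in $O^k$. Indeed, step 2 shifts exactly the agents holding objects in $[o',o_i^k)$ one remaining object to the right, and this fills the gap left by $o'$ and closes the position vacated by $i$. Consequently, for $o>o^k_i$, agent $i$'s holding reaches $o$ exactly when
\begin{equation*}
D^k=\#\{o''\in O^k\colon o''<o\}-\#\{\ell\in N^k\colon \omega_\ell<\omega_i\}
\end{equation*}
drops to $0$. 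Initially $D^1=\#(\omega_i,o]_O$ when $\omega_i<o$. Removing an agent below $i$ together with an object below $o$ leaves $D$ unchanged. Removing an agent above $i$ with an object below $o$ lowers $D$ by one. Removing an agent below $i$ with an object above $o$ raises $D$ by one.

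\emph{Case $\omega_i<\omega_j$.} Agent $i$ always holds a smaller object than $j$, so if both are eligible, $i$ is selected first. It therefore suffices to show that $i$ becomes eligible (holding $\ge o$) before $j$ is selected, precisely when the displayed inequality holds.
\begin{itemize}
\item If $o\le\omega_i$, the right-hand side is $0$ and $i$ is eligible from the start.
\item Otherwise, I would check three facts:
\begin{enumerate}
\item An agent below $i$ whose peak exceeds $o$ can never be selected before $i$ is eligible, since it cannot pass $i$. So such agents never raise $D$ in the relevant window.
\item Agents above $j$ cannot take objects below $o$ before $j$'s turn. They are behind $j$ in the selection order, and their peaks below $o$ would require holdings weakly above those peaks.
\item Each $\ell\in(i,j)_N$ with $x_\ell<o$ is selected before $j$ can be, because $\ell$'s holding lies below $j$'s and its peak lies below $o\le o^k_j$ once $j$ is eligible.
\end{enumerate}
Hence, by the time $j$ could be selected, $D$ has dropped by exactly $\#\{\ell\in(i,j)_N\colon x_\ell<o\}$. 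So $i$ wins iff this count is at least $\#(\omega_i,o]_O$.
\end{itemize}

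\emph{Case $\omega_j<\omega_i$.} This is symmetric with the roles exchanged. Now $j$ wins any tie in eligibility. Running the same potential with $j$ in place of $i$, $j$ reaches $o$ before $i$ can be selected iff $\#\{\ell\in(j,i)_N\colon x_\ell<o\}\ge\#(\omega_j,o]_O$. Otherwise $i$ is selected first, which gives the strict inequality in the statement. Here one must also check that when $j$ is stuck, $i$ does become eligible and is selected while $o$ remains; this should follow because $i$'s holding lies above $j$'s.

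\emph{Main obstacle.} The delicate step is the timing argument: showing that exactly the agents in the open interval with peaks below $o$ are removed before the $i$–$j$ race is decided, and that no other removal changes $D$ in that window. I would handle it by an induction over steps, maintaining the invariant together with the statement that no agent outside $(i,j)_N$ has yet taken an object below $o$ from the relevant side.
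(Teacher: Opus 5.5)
Your reduction to a race between $i$ and $j$ is sound, as are the order-preserving bijection between $N^k$ and $O^k$ and the potential $D^k$. One caveat: agents other than $i,j$ are guaranteed their peaks only until $o$ is assigned. The gap is in facts 2 and 3. Your conclusion that ``by the time $j$ could be selected, $D$ has dropped by exactly $\#\{\ell\in(i,j)_N\colon x_\ell<o\}$'' rests on them, and both are false in general.

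\textbf{Fact 3.} Your justification uses $x_\ell<o\le o^k_j$. But $\ell$ is eligible only if $x_\ell\le o^k_\ell$, and $\ell$ may still sit below its peak. Take $a<b<c<d<e$ held by $i,\ell,m,j,q$, with $o=d$ and $(x_i,x_\ell,x_m,x_j,x_q)=(a,c,e,d,b)$. At Step 1, $i,\ell,m$ hold objects below their peaks $d,c,e$, while $j$ holds its peak $d$. So $j$ is selected at once, although $\ell\in(i,j)_N$ has $x_\ell=c<o$, and $D$ has not dropped at all.

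\textbf{Fact 2.} It fails when $\omega_j<o$. Take $a<b<c$ held by $i,j,m$, with $o=c$ and $(x_i,x_j,x_m)=(b,c,a)$. Then $m$, which lies above $j$, is selected first and takes $a<o$, lowering $D$.

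The lemma survives in both examples, but your argument does not prove it; in particular the ``if'' direction has no valid proof. The timing statement you flag as the main obstacle cannot be established because it is not true. What is true is a counting statement.

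\textbf{Repair.} Split on the position of $\omega_j$ relative to $o$.

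If $\omega_j<o$, then $j$'s holding can reach $o$ but never pass it. Once $j$ holds $o$ it is eligible, so only agents at or below it are selected and it is never shifted. Hence $i$, being below $j$, is never eligible. Meanwhile $\#\{\ell\in(i,j)_N\colon x_\ell<o\}\le\#(\omega_i,\omega_j)_O<\#(\omega_i,o]_O$, so the condition fails too.

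If $o\le\omega_j$, then $j$ is eligible from the start, so no agent above $j$ is selected before $o$ is assigned. This is the valid version of fact 2. Together with fact 1, it gives $D^k=D^1-c_k$ while $i$ is ineligible, where $c_k$ counts the removed agents of $(i,j)_N$ with peak below $o$. This yields ``only if''.

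For ``if'', replace fact 3 with a pigeonhole argument at the step $k$ at which $j$ is selected while $i$ is ineligible. Every surviving $\ell\in(i,j)_N$ with $x_\ell<o$ is then ineligible, so $o^k_i<o^k_\ell<x_\ell<o$. By your bijection, these holdings are distinct elements of $O^k\cap(o^k_i,o)$, a set with exactly $D^k-1$ elements. Writing $c$ for the full count, this gives $c-c_k\le D^1-c_k-1$, that is, $c<D^1$. The same repair applies after exchanging roles when $\omega_j<\omega_i$.

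This repaired argument is exactly the paper's. While $i$ is ineligible, your $D^k$ equals its $\#\hat{O}^k$. The paper shows that $\#\hat{N}^k-\#\hat{O}^k$ is invariant while $j$ is eligible (Case 3.1), and uses the same pigeonhole when $j$ leaves (Case 2). It also shows the inequality fails throughout when $o^k_j<o$ (Case 3.2).
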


\begin{prf}
    Under $\hat{R}\in\mathcal{R}^n$, $o$ is the most-preferred object for $i$ and $j$. $x_\ell$ is the most-preferred object for each other agent $\ell$. These objects are all distinct and differ from $o$.   

    We first show that the following statements hold at the start of each Step $k$ before $o$ is assigned: $i,j\in N^k$, $o\in O^k$,  $\max_{\hat{R}_i} O^k=\max_{\hat{R}_j}O^k=o$, and for each $\ell\in N^k\setminus\{i,j\}$, $x_\ell\in O^k$ and $\max_{\hat{R}_\ell}O^k=x_\ell$. 
    These statements hold at Step 1 by the definition of $\hat{R}$.
    Suppose they hold at the start of Step $k$.
    If $i$ or $j$ is selected, that agent receives $o$. If $\ell\notin\{i,j\}$ is selected, since $x_\ell$  is the most-preferred object for $\ell$, $\CR_\ell(\hat{R})=x_\ell\ne o$, $N^{k+1}=N^k\setminus\{\ell\}$, and $O^{k+1}=O^k\setminus\{x_\ell\}$.
    Thus $i,j\in N^{k+1}$ and $o\in O^{k+1}$.
    For $m\in N^{k+1}\setminus\{i,j\}$, by the induction hypothesis, $x_m\in O^k$.
    Hence $x_m\in O^k\setminus\{x_\ell\}=O^{k+1}$.
    Each remaining agent's first choice is therefore still available, so the statements follow by induction.
    
    The crawler assigns each object in finitely many steps. 
    The statements above show that only $i$ or $j$ can receive $o$. 
     Thus $\CR_\ell(\hat{R})\ne o$ for  each $\ell\notin\{i,j\}$. 
     Therefore $\CR_i(\hat{R})=o$ if and only if $\CR_j(\hat{R})\ne o$.
     
     Suppose that $\omega_i<\omega_j$.
     At each Step $k$ before $o$ is assigned, define
     \begin{align*}
         \hat{N}^k&=\{m\in N^k\colon o_i^k<o_m^k<o_j^k,x_m<o\},\\
         \hat{O}^k&=\{o'\in O^k\colon o_i^k< o'\le o\}.
     \end{align*}
     Note that, since the remaining agents stay in the same order, 
     \begin{equation*}
         \hat{N}^k=\{m\in N^k\colon\omega_i<\omega_m<\omega_j,x_m<o\}.
     \end{equation*}
     Whenever $\ell\notin\{i,j\}$ leaves, it follows that $\hat{N}^{k+1}=\hat{N}^k\setminus\{\ell\}$.
     In particular $\hat{N}^{k+1}=\hat{N}^k$ if $\ell\notin\hat{N}^k$.

     We prove the following claim by induction on the number $m=\#N^k$ of the remaining agents: 
     at each step before $o$ is assigned, $\CR_i(\hat{R})=o$ if and only if $\#\hat{N}^k\ge\#\hat{O}^k$. 
     If $m=2$, then $N^k=\{i,j\}$, $O^k=\{o_i^k,o_j^k\}$, and $\hat{N}^k=\emptyset$. 
     If $o=o_i^k$, $\CR_i(\hat{R})=o$ and $\#\hat{N}^k=\#\hat{O}^k=0$. 
     If $o=o_j^k$, $\CR_j(\hat{R})=o$, so $\CR_i(\hat{R})\ne o$ and $\#\hat{N}^k=0<1=\#\hat{O}^k$. 

     Let $m> 2$, and suppose the claim holds whenever $m-1$ agents remain and $o$ is unassigned.  
     Consider a step with $m$ remaining agents.

     \medskip
     \noindent\textsc{Case 1:} Agent $i$ leaves.
     By the definition of the crawler and the statements above, $\CR_i(\hat{R})=o\le o_i^k$.
     Hence $\hat{O}^k=\emptyset$ and $\#\hat{N}^k\ge\#\hat{O}^k=0$.
     
     \medskip
     \noindent\textsc{Case 2:} Agent $j$ leaves.
     By the definition of the crawler, $o_i^k<o\le o_j^k$, and for each $\ell\in \hat{N}^k$, 
     \begin{equation*}
    o_i^k<o_\ell^k<\max_{\hat{R}_\ell}O^k=x_\ell<o.
     \end{equation*}
     Hence $\{o_\ell^k\in O^k\colon \ell\in\hat{N}^k\}\subseteq\hat{O}^k\setminus\{o\}$.
     Since $o_\ell^k\ne o_m^k$ for distinct $\ell,m\in N^k$, and $o\in\hat{O}^k$,
     \begin{equation*}
         \#\hat{N}^k=\#\{o_\ell^k\in O^k\colon \ell\in\hat{N}^k\}\le\#(\hat{O}^k\setminus\{o\})=\#\hat{O}^k-1<\#\hat{O}^k.
     \end{equation*}
     Also, $\CR_j(\hat{R})=o$, so $\CR_i(\hat{R})\ne o$. 

     \medskip\noindent\textsc{Case 3:} An agent $\ell\notin\{i,j\}$ leaves.
     Here $\CR_\ell(\hat{R})=x_\ell\ne o$, so $N^{k+1}=N^k\setminus\{\ell\}$ and $O^{k+1}=O^k\setminus\{x_\ell\}$.
     In particular, $o$ remains unassigned and $\# N^{k+1}=m-1$. 
     We show that $\#\hat{N}^k\ge\#\hat{O}^k$ and $\#\hat{N}^{k+1}\ge\#\hat{O}^{k+1}$ either both hold or both fail. 

     \medskip
     \noindent\textsc{Case 3.1:} $o\le o_j^k$.
     Since $\max_{\hat{R}_j}O^k=o\le o_j^k$ and $\ell\ne j$ is selected, by the definition of the crawler, $x_\ell\le o_\ell^k<o_j^k$.
     \begin{enumerate}
         \item[(i)] If $o_\ell^k<o_i^k$, then $x_\ell<o_i^k$ and $o_i^{k+1}=o_i^k$. 
         Since $\ell\notin\hat{N}^k$ and $x_\ell\notin\hat{O}^k$, $\hat{N}^k=\hat{N}^{k+1}$ and $\hat{O}^k=\hat{O}^{k+1}$.
         \item[(ii)] If $o_i^k<o_\ell^k$ and $o<x_\ell$, then $o_i^k<o$. Thus $o_i^k<o<x_\ell\le o_\ell^k<o_j^k$.
         Since $o_i^{k+1}=o_i^k$, $\ell\notin \hat{N}^k$, and $x_\ell\notin\hat{O}^k$, we have $\hat{N}^k=\hat{N}^{k+1}$ and $\hat{O}^k=\hat{O}^{k+1}$.
         \item[(iii)] If $o_i^k<o_\ell^k$ and $x_\ell<o$, then $o_i^k<o$ and $\ell\in\hat{N}^k$. 
         Thus $\hat{N}^{k+1}=\hat{N}^k\setminus\{\ell\}$ and $\#\hat{N}^{k+1}=\#\hat{N}^k-1$.
         If $o_i^k<x_\ell<o$, then $x_\ell\in\hat{O}^k$ and $o_i^{k+1}=o_i^k$, implying $\hat{O}^{k+1}=\hat{O}^k\setminus\{x_\ell\}$ and $\#\hat{O}^{k+1}=\#\hat{O}^k-1$.
         If $x_\ell\le o_i^k$, then $x_\ell\notin \hat{O}^k$. 
         Since $o_i^k<o$ and $o\in O^{k+1}$, by the definition of $o_i^{k+1}$,
         \begin{equation*}
             o_i^{k+1}=\min_<\{o'\in O^{k+1}\colon o_i^k<o'\}\le o.
         \end{equation*}
         Since $x_\ell\notin\hat{O}^k$ and $o_i^{k+1}=\min_<\hat{O}^k$, by the definition of $\hat{O}^{k+1}$, we have $o_i^{k+1}\in\hat{O}^k$, $\hat{O}^{k+1}=\hat{O}^k\setminus\{o_i^{k+1}\}$, and $\#\hat{O}^{k+1}=\#\hat{O}^k-1$.
     \end{enumerate}
     In each case, $\#\hat{N}^{k+1}-\#\hat{O}^{k+1}=\#\hat{N}^k-\#\hat{O}^k$.
     Thus $\#\hat{N}^{k+1}\ge \#\hat{O}^{k+1}$ if and only if $\#\hat{N}^{k}\ge \#\hat{O}^{k}$.

     \medskip
     \noindent\textsc{Case 3.2:} $o_j^k<o$.
     For each $\ell\in\hat{N}^k$, $o_i^k<o_\ell^k<o_j^k<o$.
     Hence $\{o_\ell^k\in O^k\colon \ell\in\hat{N}^k\}\subseteq\hat{O}^k\setminus\{o\}$ and $\#\hat{N}^k=\#\{o_\ell^k\in O^k\colon \ell\in\hat{N}^k\}\le \#\hat{O}^k-1$.
     Since $o\in O^{k+1}$, by the definitions of $o_i^{k+1}$ and $o_j^{k+1}$, $o_i^{k+1}<o_j^{k+1}\le o$. 
     Similarly, at Step $k+1$, 
     $\{o_\ell^{k+1}\in O^{k+1}\colon \ell\in\hat{N}^{k+1}\}\subseteq\hat{O}^{k+1}\setminus\{o\}$ and $\#\hat{N}^{k+1}=\#\{o_\ell^{k+1}\in O^{k+1}\colon \ell\in\hat{N}^{k+1}\}\le\#\hat{O}^{k+1}-1$.
     In \textsc{Case 3}, apply the induction hypothesis at Step $k+1$. By the preceding inequalities, $\CR_i(\hat{R})=o$ if and only if $\#\hat{N}^k\ge\#\hat{O}^k$. This completes the induction. 
     At Step $1$, $N^1=N$, $O^1=O$, and $o^1_\ell=\omega_\ell$, so 
     \begin{align*}
         \hat{N}^1 &= \{\ell\in N\colon o_i^1<o_\ell^1<o_j^1,x_\ell<o\}\\
         &= \{\ell\in N\colon \omega_i<\omega_\ell<\omega_j,x_\ell<o\}\\
         &=\{\ell\in(i,j)_N\colon x_\ell<o\},\\
         \hat{O}^1 &= \{o'\in O\colon o_i^1<o'\le o\}\\
         &= \{o'\in O\colon \omega_i< o'\le o\}\\
         &=(\omega_i,o]_O.
     \end{align*}
     Hence $\#\hat{N}^1=\#\{\ell\in(i,j)_N\colon x_\ell<o\}$ and  $\#\hat{O}^1=\#(\omega_i,o]_O$. 

     If $\omega_j<\omega_i$, exchange the roles of $i$ and $j$.
     Since $\max_{\hat{R}_i}O=\max_{\hat{R}_j}O=o$, the preceding argument gives $\CR_j(\hat{R})=o$ if and only if 
     \begin{equation*}
          \#\{\ell\in(j,i)_N\colon x_{\ell}<o\}\ge \#(\omega_j,o]_O.
     \end{equation*}
     Since $\CR_i(\hat{R})=o$ exactly when $\CR_j(\hat{R})\ne o$, we obtain $\CR_i(\hat{R})=o$ if and only if 
     \begin{equation*}
          \#\{\ell\in(j,i)_N\colon x_{\ell}<o\}< \#(\omega_j,o]_O.
     \end{equation*}\qed
\end{prf}

    \begin{lem}\label{lem: U,L}
               Let $x\in\mathcal{X}$, $o\in O$, and $j=x_o$.
If $\omega_j<o$, let $U$ be the agent with the smallest endowment among
those $i$ satisfying $\omega_j<\omega_i$ and 
\begin{equation*}
\#\{\ell\in N:\omega_j<\omega_\ell\le\omega_i,\ x_\ell<o\}
=\#(\omega_j,o]_O.
\end{equation*}
If $o<\omega_j$, let $L$ be the agent with the largest endowment among
those $i$ satisfying $\omega_i<\omega_j$ and
\begin{equation*}
\#\{\ell\in N:\omega_i\le\omega_\ell<\omega_j,\ o<x_\ell\}
=\#(o,\omega_j]_O.
\end{equation*}
The agents $U$ and $L$ exist and are unique in their respective cases.
Moreover, $o\le\omega_U$ if $\omega_j<o$, and
$\omega_L\le o$ if $o<\omega_j$.
    \end{lem}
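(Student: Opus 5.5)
Fix $x\in\mathcal{X}$, $o\in O$, $j=x_o$, and suppose first that $\omega_j<o$. The plan is a discrete intermediate value argument on the counting function
\begin{equation*}
F(i)=\#\{\ell\in N:\omega_j<\omega_\ell\le\omega_i,\ x_\ell<o\}-\#(\omega_j,o]_O,
\end{equation*}
defined for agents $i$ with $\omega_j<\omega_i$, scanned in increasing order of endowment. Uniqueness is immediate, since $U$ is defined as the agent with the smallest endowment in the set where $F(i)=0$. So the work is to show this set is nonempty, and to locate its least element.

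For the first step of existence, let $i_1$ be the agent whose endowment is the smallest above $\omega_j$. It exists because $\omega_j<o$, so some object, namely $o$ itself, lies above $\omega_j$. Here $F(i_1)\le 1-\#(\omega_j,o]_O\le 0$, since $\#(\omega_j,o]_O\ge1$. As $i$ moves to the next agent in endowment order, $F$ increases by $1$ or $0$, so it takes every integer value between its first and last values.

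For the endpoint, take $i=n$ (the agent with the largest endowment), which satisfies $\omega_j<\omega_n$ because $\#(j,n]_N=\#(\omega_j,\omega_n]_O\ge\#(\omega_j,o]_O\ge1$. I claim $F(n)\ge0$. Every object $o'$ with $\omega_j<o'<o$ is some $x_\ell$ with $\ell\ne j$. Counting the set $\{\ell\colon x_\ell<o\}$ in two ways, the number of $\ell$ with $\omega_\ell>\omega_j$ and $x_\ell<o$ is at least
\begin{equation*}
\#\{\ell\colon x_\ell<o\}-\#\{\ell\colon \omega_\ell\le\omega_j,\ \ell\ne j\}\ldots
\end{equation*}
More cleanly, $\#\{\ell:x_\ell<o\}=\#\{o'<o\}$, while $\#\{\ell:\omega_\ell\le\omega_j, x_\ell<o\}\le \#\{\ell:\omega_\ell\le \omega_j\}-1$, the $-1$ coming from $j$ itself, since $x_j=o\not<o$. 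Moreover $\#\{\ell:\omega_\ell\le\omega_j\}=\#\{o'\le\omega_j\}$. Subtracting gives $F(n)\ge\#\{o'<o\}-\#\{o'\le\omega_j\}+1-\#(\omega_j,o]_O=0$. With $F(i_1)\le0\le F(n)$ and unit steps, some $i$ has $F(i)=0$, so $U$ exists.

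For the bound $o\le\omega_U$, argue by contradiction. Suppose $\omega_U<o$. Then the count in $F(U)$ is at most $\#(j,U]_N=\#(\omega_j,\omega_U]_O<\#(\omega_j,o]_O$, because $o\in(\omega_j,o]_O\setminus(\omega_j,\omega_U]_O$. This gives $F(U)<0$, a contradiction.

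The case $o<\omega_j$ is symmetric: reverse the order $<$, scan agents downward from $j$, and replace $x_\ell<o$ by $o<x_\ell$. The same three steps go through verbatim, giving existence and uniqueness of $L$ and the bound $\omega_L\le o$. The main obstacle is the endpoint inequality $F(n)\ge0$, which needs the two-way count to be done carefully, including the exclusion of $j$. The remaining steps are routine monotonicity.
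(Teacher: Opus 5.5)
Your proof is correct and is essentially the paper's argument: the same two-way count based on $\#\{\ell\in N:x_\ell<o\}=\#\{o'\in O:o'<o\}$, with $j$ excluded because $x_j=o$, gives the endpoint bound; the unit-step scan gives existence, uniqueness follows from minimality, and $o\le\omega_U$ follows from $\#(j,U]_N=\#(\omega_j,\omega_U]_O<\#(\omega_j,o]_O$. The differences are cosmetic: you start the scan at the first agent above $j$ rather than at $j$, and you get the $L$ case by reversing the order instead of writing it out, which is legitimate since $\#(o,\omega_j]_O=\#\{o'\in O:o\le o'<\omega_j\}$; you should also delete the abandoned first attempt at the count that precedes the words ``More cleanly''.
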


\begin{prf}
\mbox{}\\
\medskip
\noindent\textsc{Case 1:} $\omega_j<o$.
We first show that $U$ is well-defined.
Since $x_j=o$,
\begin{equation*}
\{\ell\in N:x_\ell<o\}
=\{\ell\in N:\omega_\ell<\omega_j,\ x_\ell<o\}\cup\{\ell\in N:\omega_j<\omega_\ell,\ x_\ell<o\}.
\end{equation*}
The two sets are disjoint. Since $x\in\mathcal{X}$,
$\#\{\ell\in N:x_\ell<o\}=\#\{o'\in O:o'<o\}$.
Thus
\begin{align*}
\#\{\ell\in N:\omega_j<\omega_\ell,\ x_\ell<o\}
&=\#\{\ell\in N:x_\ell<o\}
       -\#\{\ell\in N:\omega_\ell<\omega_j,\ x_\ell<o\}\\
&=\#\{o'\in O:o'<o\}
       -\#\{\ell\in N:\omega_\ell<\omega_j,\ x_\ell<o\}\\
&\ge\#\{o'\in O:o'<o\}
       -\#\{\ell\in N:\omega_\ell<\omega_j\}\\
&=\#\{o'\in O:o'<o\}
       -\#\{o'\in O:o'<\omega_j\}.
\end{align*}

Since $\omega_j<o$,
\begin{equation*}
\{o'\in O:o'<o\}
=\{o'\in O:o'<\omega_j\}\cup\{\omega_j\}\cup(\omega_j,o)_O.
\end{equation*}
Thus
\begin{equation*}
\#\{o'\in O:o'<o\}-\#\{o'\in O:o'<\omega_j\}
=1+\#(\omega_j,o)_O=\#(\omega_j,o]_O,
\end{equation*}
It follows that $\#\{\ell\in N:\omega_j<\omega_\ell,\ x_\ell<o\}\ge\#(\omega_j,o]_O>0$.

Start with $i=j$, for which $\{\ell\in N:\omega_j<\omega_\ell\le\omega_j,\ x_\ell<o\}=\emptyset$.
Then consider the agents $i$ with $\omega_j<\omega_i$ in increasing order
of $\omega_i$. Including $i$ adds one agent to the set being counted
if $x_i<o$, and adds none if $o\le x_i$:
\begin{equation*}
\#\{\ell\in N:\omega_j<\omega_\ell\le\omega_i,\ x_\ell<o\}-\#\{\ell\in N:\omega_j<\omega_\ell<\omega_i,\ x_\ell<o\}
=\begin{cases}
1&\text{if $x_i<o$,}\\
0&\text{if $o\le x_i$.}
\end{cases}
\end{equation*}
After all agents with $\omega_j<\omega_i$ have been considered, the number is
\begin{equation*}
\#\{\ell\in N:\omega_j<\omega_\ell,\ x_\ell<o\}\ge\#(\omega_j,o]_O>0.
\end{equation*}
Starting from zero and increasing only by zero or one, the number must equal
$\#(\omega_j,o]_O$ for some $i$.
Thus $U$ exists: it is the agent with the smallest $\omega_i$
for which this equality holds.
Since $\omega_\ell\ne\omega_m$ for $\ell\ne m$, $U$ is unique.

By the definition of $U$ and since $\omega\in\mathcal{X}$,
\begin{align*}
\#(\omega_j,o]_O
&=\#\{\ell\in N:\omega_j<\omega_\ell\le\omega_U,\ x_\ell<o\}\\
&\le\#\{\ell\in N:\omega_j<\omega_\ell\le\omega_U\}\\
&=\#(\omega_j,\omega_U]_O.
\end{align*}
If $\omega_U<o$, then $(\omega_j,o]_O
=(\omega_j,\omega_U]_O\cup(\omega_U,o]_O$.
Since $o\in(\omega_U,o]_O$, we obtain
\begin{equation*}
\#(\omega_j,o]_O=\#(\omega_j,\omega_U]_O+\#(\omega_U,o]_O>\#(\omega_j,\omega_U]_O,
\end{equation*}
contrary to $\#(\omega_j,o]_O\le\#(\omega_j,\omega_U]_O$.
Hence $o\le\omega_U$.

\medskip
\noindent\textsc{Case 2:} $o<\omega_j$.
We first show that $L$ is well-defined.
Since $x_j=o$,
\begin{equation*}
\{\ell\in N:o<x_\ell\}
=\{\ell\in N:\omega_\ell<\omega_j,\ o<x_\ell\}
\cup\{\ell\in N:\omega_j<\omega_\ell,\ o<x_\ell\}.
\end{equation*}
The two sets are disjoint. Since $x\in\mathcal{X}$,
$\#\{\ell\in N:o<x_\ell\}=\#\{o'\in O:o<o'\}$.
Thus 
\begin{align*}
\#\{\ell\in N:\omega_\ell<\omega_j,\ o<x_\ell\}
&=\#\{\ell\in N:o<x_\ell\}
       -\#\{\ell\in N:\omega_j<\omega_\ell,\ o<x_\ell\}\\
&=\#\{o'\in O:o<o'\}
       -\#\{\ell\in N:\omega_j<\omega_\ell,\ o<x_\ell\}\\
&\ge\#\{o'\in O:o<o'\}
       -\#\{\ell\in N:\omega_j<\omega_\ell\}\\
&=\#\{o'\in O:o<o'\}
       -\#\{o'\in O:\omega_j<o'\}.
\end{align*}
Since $\{o'\in O:o<o'\}
=(o,\omega_j]_O\cup\{o'\in O:\omega_j<o'\}$,
we have
\begin{equation*}
\#\{o'\in O:o<o'\}-\#\{o'\in O:\omega_j<o'\}
=\#(o,\omega_j]_O.
\end{equation*}
Thus $\#\{\ell\in N:\omega_\ell<\omega_j,\ o<x_\ell\}\ge\#(o,\omega_j]_O>0.$

Start with $i=j$, for which $\{\ell\in N:\omega_j\le\omega_\ell<\omega_j,\ o<x_\ell\}=\emptyset$.
Then consider the agents $i$ with $\omega_i<\omega_j$ in decreasing order
of $\omega_i$. Including $i$ adds one agent to the set being counted
if $o<x_i$, and adds none if $x_i\le o$:
\begin{equation*}
\#\{\ell\in N:\omega_i\le\omega_\ell<\omega_j,\ o<x_\ell\}-\#\{\ell\in N:\omega_i<\omega_\ell<\omega_j,\ o<x_\ell\}
=\begin{cases}
1&\text{if $o<x_i$,}\\
0&\text{if $x_i\le o$.}
\end{cases}
\end{equation*}
After all agents with $\omega_i<\omega_j$ have been considered, the number is
\begin{equation*}
\#\{\ell\in N:\omega_\ell<\omega_j,\ o<x_\ell\}\ge\#(o,\omega_j]_O>0.
\end{equation*}
Starting from zero and increasing only by zero or one, the number must equal
$\#(o,\omega_j]_O$ for some $i$.
Thus $L$ exists: it is the agent with the largest $\omega_i$
for which this equality holds.
Since $\omega_\ell\ne\omega_m$ for $\ell\ne m$, $L$ is unique.

By the definition of $L$ and since $\omega\in\mathcal{X}$,
\begin{align*}
\#(o,\omega_j]_O
&=\#\{\ell\in N:\omega_L\le\omega_\ell<\omega_j,\ o<x_\ell\}\\
&\le\#\{\ell\in N:\omega_L\le\omega_\ell<\omega_j\}\\
&=\#\{o'\in O:\omega_L\le o'<\omega_j\}\\
&=1+\#(\omega_L,\omega_j)_O\\
&=\#(\omega_L,\omega_j]_O.
\end{align*}
If $o<\omega_L$, then $(o,\omega_j]_O=(o,\omega_L]_O\cup(\omega_L,\omega_j]_O$.
Since $\omega_L\in(o,\omega_L]_O$, we obtain
\begin{equation*}
\#(o,\omega_j]_O=\#(o,\omega_L]_O+\#(\omega_L,\omega_j]_O>\#(\omega_L,\omega_j]_O,
\end{equation*}
contrary to $\#(o,\omega_j]_O\le\#(\omega_L,\omega_j]_O$.
Hence $\omega_L\le o$.\qed
\end{prf}

    \begin{lem}\label{lem: N_x(o)=}
        Let $x\in\mathcal{X}$, $o\in O$, and $j=x_o$. Let $U$ and $L$ be as in Lemma~\ref{lem: U,L} in their respective cases. Then
\begin{equation*}
N_x(o)=
\begin{cases}
[j,U]_N&\text{if $\omega_j<o$,}\\
[L,j]_N&\text{if $o<\omega_j$,}\\
\{j\}&\text{if $\omega_j=o$.}
\end{cases}
\end{equation*}
    \end{lem}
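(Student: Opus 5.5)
The plan is to combine Lemma~\ref{lem: CR=h} with Lemma~\ref{lem: cr_i=o iff}. First, $j\in N_x(o)$ always holds, since $\CR_j(R^x)=x_j=o$. For $i\ne j$, Lemma~\ref{lem: CR=h} (applied with $R_{-i}=R^x_{-i}$) gives $i\in N_x(o)$ if and only if $\CR_i(R_i^o,R^x_{-i})=o$. Lemma~\ref{lem: cr_i=o iff} then turns this into a counting condition, which we translate into an interval of agents using the monotone counting functions from Lemma~\ref{lem: U,L}.

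\textbf{Case $\omega_j<o$.} For $i$ with $\omega_i<\omega_j$, the condition $\#\{\ell\in(i,j)_N\colon x_\ell<o\}\ge\#(\omega_i,o]_O$ must fail. Indeed, the left side is at most $\#(i,j)_N=\#(\omega_i,\omega_j)_O$. Since $\omega_j<o$, we have $(\omega_i,o]_O\supseteq(\omega_i,\omega_j)_O\cup\{\omega_j,o\}$, so the right side is strictly larger. Hence no agent below $j$ is in $N_x(o)$.

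For $i$ with $\omega_j<\omega_i$, define $c(i)=\#\{\ell\in(j,i]_N\colon x_\ell<o\}$. The condition is $\#\{\ell\in(j,i)_N\colon x_\ell<o\}<\#(\omega_j,o]_O$. The left side equals $c(i')$ for the predecessor $i'$ of $i$ in the endowment order, or $0$ if $i'=j$. Since $c$ is nondecreasing in $\omega_i$ and increases in steps of $0$ or $1$, and $U$ is the first agent with $c(U)=\#(\omega_j,o]_O$, the strict inequality holds exactly when every agent strictly below $i$ (and above $j$) has $c<\#(\omega_j,o]_O$. This is equivalent to $\omega_i\le\omega_U$. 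Hence $N_x(o)=[j,U]_N$.

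\textbf{Case $o<\omega_j$.} This is the mirror image, with the roles of the two parts of Lemma~\ref{lem: cr_i=o iff} swapped.

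For $i$ with $\omega_j<\omega_i$, the condition is $\#\{\ell\in(j,i)_N\colon x_\ell<o\}<\#(\omega_j,o]_O=0$, which is impossible. So no agent above $j$ is in $N_x(o)$.

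For $i$ with $\omega_i<\omega_j$, the condition is $\#\{\ell\in(i,j)_N\colon x_\ell<o\}\ge\#(\omega_i,o]_O$. We rewrite it in terms of agents with $x_\ell>o$. Among the $\#(i,j)_N=\#(\omega_i,\omega_j)_O$ agents in $(i,j)_N$, none has $x_\ell=o$, since $x_o=j$. So the condition becomes
\[
\#\{\ell\in(i,j)_N\colon o<x_\ell\}\le\#(\omega_i,\omega_j)_O-\#(\omega_i,o]_O .
\]
If $o<\omega_i$, the right side is positive and the left side is also controlled by it, so this case must be handled within the same bookkeeping. When $\omega_i\le o$, the right side simplifies to $\#(o,\omega_j)_O=\#(o,\omega_j]_O-1$.

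Now set $d(i)=\#\{\ell\in[i,j)_N\colon o<x_\ell\}$. The condition reads $d(i^+)<\#(o,\omega_j]_O$, where $i^+$ is the successor of $i$ and $d(j)=0$. The same monotone-step argument, with $L$ the largest agent attaining equality, gives membership if and only if $\omega_L\le\omega_i$. Lemma~\ref{lem: U,L} gives $\omega_L\le o$, which justifies the simplification in the relevant range.

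The main obstacle is this index bookkeeping in the second case. One must verify the translation between counts of $x_\ell<o$ and $x_\ell>o$ carefully, including the agents with $o<\omega_i<\omega_j$. I expect to show these agents always satisfy the condition, because then $\#(\omega_i,o]_O=0$. The same care is needed at the boundary agent $U$ or $L$ itself.

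\textbf{Case $\omega_j=o$.} For $\omega_i<\omega_j$, the right side is $\#(\omega_i,\omega_j]_O=\#(i,j)_N+1$, which exceeds the left side, so the condition fails. For $\omega_j<\omega_i$, the right side $\#(\omega_j,o]_O=0$ makes the strict inequality impossible. Hence $N_x(o)=\{j\}$.
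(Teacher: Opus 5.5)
Your proposal is correct and follows essentially the paper's own argument. You reduce membership to the counting conditions of Lemma~\ref{lem: cr_i=o iff} via Lemma~\ref{lem: CR=h}. When $o<\omega_j$, you pass to the complementary count of agents with $o<x_\ell$ and use $\omega_L\le o$ from Lemma~\ref{lem: U,L} to absorb the agents with $o\le\omega_i<\omega_j$. Those agents qualify trivially because $\#(\omega_i,o]_O=0$, so your hedged ``I expect to show'' is immediate; note also that the right-hand side there is only nonnegative, not necessarily positive.

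Packaging the two inclusions as a single equivalence through the monotone counts $c$ and $d$ is a cosmetic reorganization. The paper's proof rests on the same fact: the count just before $U$ (resp.\ $L$) equals $\#(\omega_j,o]_O-1$ (resp.\ $\#(o,\omega_j]_O-1$).
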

    \begin{prf}
        Since $\CR_j(R^x)=x_j=o$, we have $j\in N_x(o)$.

\medskip
\noindent\textsc{Case 1:} $\omega_j<o$.
By the definition of $U$ in Lemma~\ref{lem: U,L}, $x_U<o$ and
\begin{equation*}
\{\ell\in N:\omega_j<\omega_\ell\le\omega_U,\ x_\ell<o\}=\{\ell\in(j,U)_N:x_\ell<o\}\cup\{U\}.
\end{equation*}
Thus, $\#\{\ell\in(j,U)_N:x_\ell<o\}=\#(\omega_j,o]_O-1$.

We show that $[j,U]_N\subseteq N_x(o)$.
Let $i\in[j,U]_N$. If $i=j$, then $i\in N_x(o)$.
Suppose $i\ne j$. Then $\omega_j<\omega_i\le\omega_U$.
By $(j,i)_N\subseteq(j,U)_N$,
\begin{equation*}
\#\{\ell\in(j,i)_N:x_\ell<o\}\le\#\{\ell\in(j,U)_N:x_\ell<o\}=\#(\omega_j,o]_O-1<\#(\omega_j,o]_O.
\end{equation*}
Lemmas~\ref{lem: CR=h} and \ref{lem: cr_i=o iff} imply
$i\in N_x(o)$.

We next show that $N_x(o)\subseteq[j,U]_N$.
Let $i\in N_x(o)$.
If $\omega_i<\omega_j$, by $\omega\in\mathcal{X}$, 
$(\omega_i,\omega_j]_O
=(\omega_i,\omega_j)_O\cup\{\omega_j\}$, and 
$(\omega_i,o]_O
=(\omega_i,\omega_j]_O\cup(\omega_j,o]_O$,
we have
\begin{align*}
\#\{\ell\in(i,j)_N:x_\ell<o\}
&\le\#(i,j)_N\\
&=\#(\omega_i,\omega_j)_O\\
&=\#(\omega_i,\omega_j]_O-1\\
&<\#(\omega_i,o]_O,
\end{align*}
which contradicts Lemma~\ref{lem: cr_i=o iff}.
If $\omega_U<\omega_i$, then
\begin{equation*}
\{\ell\in N:\omega_j<\omega_\ell\le\omega_U,\ x_\ell<o\}\subseteq\{\ell\in(j,i)_N:x_\ell<o\},
\end{equation*}
and hence
\begin{align*}
\#\{\ell\in(j,i)_N:x_\ell<o\}
&\ge\#\{\ell\in N:\omega_j<\omega_\ell\le\omega_U,\ x_\ell<o\}\\
&=\#(\omega_j,o]_O.
\end{align*}
This again contradicts Lemma~\ref{lem: cr_i=o iff}.
Therefore $\omega_j\le\omega_i\le\omega_U$, so $i\in[j,U]_N$.

Therefore $N_x(o)=[j,U]_N$.

\medskip
\noindent\textsc{Case 2:} $o<\omega_j$.
By the definition of $L$ in Lemma~\ref{lem: U,L}, $o<x_L$ and
\begin{equation*}
\{\ell\in N:\omega_L\le\omega_\ell<\omega_j,\ o<x_\ell\}=\{L\}\cup\{\ell\in(L,j)_N:o<x_\ell\}.
\end{equation*}
Therefore $\#\{\ell\in(L,j)_N:o<x_\ell\}=\#(o,\omega_j]_O-1$.
For $\omega_i<o$, we will use the following identity.
Since $x_o=j\notin(i,j)_N$,
\begin{equation*}
(i,j)_N
=\{\ell\in(i,j)_N:x_\ell<o\}
 \cup\{\ell\in(i,j)_N:o<x_\ell\}.
\end{equation*}
Since $\omega\in\mathcal{X}$ and $\omega_i<o<\omega_j$,
\begin{align*}
\#(i,j)_N
&=\#(\omega_i,\omega_j)_O\\
&=\#(\omega_i,\omega_j]_O-1\\
&=\#(\omega_i,o]_O+\#(o,\omega_j]_O-1.
\end{align*}
Hence
\begin{equation*}
\#\{\ell\in(i,j)_N:x_\ell<o\}=\#(\omega_i,o]_O+\#(o,\omega_j]_O-1-\#\{\ell\in(i,j)_N:o<x_\ell\}.
\end{equation*}

We show that $[L,j]_N\subseteq N_x(o)$.
Let $i\in[L,j]_N$. If $i=j$, then $i\in N_x(o)$.
Suppose $i\ne j$. Then $\omega_L\le\omega_i<\omega_j$.
By $(i,j)_N\subseteq(L,j)_N$,
\begin{align*}
\#\{\ell\in(i,j)_N:o<x_\ell\}
&\le\#\{\ell\in(L,j)_N:o<x_\ell\}\\
&=\#(o,\omega_j]_O-1<\#(o,\omega_j]_O.
\end{align*}
If $\omega_i<o$, then
\begin{align*}
\#\{\ell\in(i,j)_N:x_\ell<o\}
&\ge\#(\omega_i,o]_O+\#(o,\omega_j]_O-1
     -\bigl(\#(o,\omega_j]_O-1\bigr)\\
&=\#(\omega_i,o]_O.
\end{align*}
If $o\le\omega_i$, then
\begin{equation*}
\#\{\ell\in(i,j)_N:x_\ell<o\}\ge0=\#(\omega_i,o]_O.
\end{equation*}
In either case, Lemmas~\ref{lem: CR=h} and \ref{lem: cr_i=o iff}
imply $i\in N_x(o)$.

We next show that $N_x(o)\subseteq[L,j]_N$.
Let $i\in N_x(o)$.
If $\omega_j<\omega_i$, then
\begin{equation*}
\#\{\ell\in(j,i)_N:x_\ell<o\}\ge0=\#(\omega_j,o]_O,
\end{equation*}
which contradicts Lemma~\ref{lem: cr_i=o iff}.
If $\omega_i<\omega_L$, then
\begin{equation*}
\{\ell\in N:\omega_L\le\omega_\ell<\omega_j,\ o<x_\ell\}\subseteq\{\ell\in(i,j)_N:o<x_\ell\},
\end{equation*}
so
\begin{align*}
\#\{\ell\in(i,j)_N:o<x_\ell\}
&\ge\#\{\ell\in N:\omega_L\le\omega_\ell<\omega_j,\ o<x_\ell\}\\
&=\#(o,\omega_j]_O.
\end{align*}
Since $\omega_i<\omega_L\le o$,
\begin{align*}
\#\{\ell\in(i,j)_N:x_\ell<o\}
&\le\#(\omega_i,o]_O+\#(o,\omega_j]_O-1-\#(o,\omega_j]_O\\
&=\#(\omega_i,o]_O-1\\
&<\#(\omega_i,o]_O.
\end{align*}
This again contradicts Lemma~\ref{lem: cr_i=o iff}.
Hence $\omega_L\le\omega_i\le\omega_j$, so $i\in[L,j]_N$.

Therefore $N_x(o)=[L,j]_N$.

\medskip
\noindent\textsc{Case 3:} $\omega_j=o$.
Since $j\in N_x(o)$, we have $\{j\}\subseteq N_x(o)$.
To show that $N_x(o)\subseteq\{j\}$, let $i\ne j$.
If $\omega_i<\omega_j$, $\omega\in\mathcal{X}$ and
$(\omega_i,o]_O=(\omega_i,o)_O\cup\{o\}$ give
\begin{align*}
\#\{\ell\in(i,j)_N:x_\ell<o\}
&\le\#(i,j)_N\\
&=\#(\omega_i,o)_O\\
&=\#(\omega_i,o]_O-1<\#(\omega_i,o]_O.
\end{align*}
If $\omega_j<\omega_i$, then $\#(\omega_j,o]_O=0$.
Thus Lemmas~\ref{lem: CR=h} and \ref{lem: cr_i=o iff} exclude
$i\ne j$, giving $N_x(o)=\{j\}$.\qed
    \end{prf}

Fix a true preference profile $R^0\in\mathcal{R}^n$ for the rest of the appendix.
The operations $\vee$ and $\wedge$ below are defined using $R^0$.

    \begin{lem}\label{lem: 7, x or y}
        If $x,y\in\mathcal{X}\bigl(\NE(\Gamma^{\CR},R^0)\bigr)$, $i\ne j$, and $y_i=x_j=o$, then
$o\in O_i(x)$ or $o\in O_j(y)$.
    \end{lem}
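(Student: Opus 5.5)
We argue by contradiction through the interval description of Lemma~\ref{lem: N_x(o)=}. Note that $x_o=j$ and $y_o=i$, and that $o\in O_i(x)$ iff $i\in N_x(o)$, while $o\in O_j(y)$ iff $j\in N_y(o)$.

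\textbf{Easy positional cases.} By Lemmas~\ref{lem: U,L} and~\ref{lem: N_x(o)=}, $N_x(o)$ is an interval of agents, ordered by endowment, that contains $j$ and every agent whose endowment lies weakly between $\omega_j$ and $o$. This uses $o\le\omega_U$ and $\omega_L\le o$. Likewise, $N_y(o)$ contains $i$ and every agent whose endowment lies weakly between $\omega_i$ and $o$.
\begin{enumerate*}[label=(\roman*)]
\item If $\omega_i$ lies weakly between $\omega_j$ and $o$, then $i\in N_x(o)$.
\item If $\omega_j$ lies weakly between $\omega_i$ and $o$, then $j\in N_y(o)$.
\end{enumerate*}
One of these holds whenever $\omega_i$ and $\omega_j$ lie weakly on the same side of $o$. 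So the only remaining configurations are $\omega_j<o<\omega_i$ and its mirror image $\omega_i<o<\omega_j$. I treat the first; the second is symmetric, with the roles of $x$ and $y$ reversed.

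\textbf{Opposite sides: counting.} Let $\omega_j<o<\omega_i$ and set $K=\#(\omega_j,o]_O$. Apply Lemmas~\ref{lem: CR=h} and~\ref{lem: cr_i=o iff} twice. For agent $i$ at $x$ (holder $j$, $\omega_j<\omega_i$), we get $i\notin N_x(o)$ iff $\#\{\ell\in(j,i)_N\colon x_\ell<o\}\ge K$. For agent $j$ at $y$ (holder $i$, $\omega_j<\omega_i$), we get $j\notin N_y(o)$ iff $\#\{\ell\in(j,i)_N\colon y_\ell<o\}<K$. If both fail, then strictly more agents of $(j,i)_N$ lie below $o$ under $x$ than under $y$. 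Hence some $\ell\in(j,i)_N$ has $x_\ell<o<y_\ell$; equality is impossible because $x_o=j$ and $y_o=i$.

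\textbf{Contradiction via equilibrium.} Here the Nash property enters through Lemma~\ref{lem: NE}. The plan is to show $y_\ell\in O_\ell(x)$ and $x_\ell\in O_\ell(y)$. Lemma~\ref{lem: NE} then gives $x_\ell\mathbin{R^0_\ell}y_\ell$ and $y_\ell\mathbin{R^0_\ell}x_\ell$, so $x_\ell=y_\ell$, which is absurd. If only one inclusion is available, I would instead combine it with Lemma~\ref{lem: single-peaked} on $x_\ell<o<y_\ell$: $\ell$ cannot strictly prefer both endpoints to $o$. This should rule out the configuration together with $\ell$'s ability to reach $o$ itself, since $\ell$ lies strictly between $j$ and $i$ and so inside the relevant intervals of Lemma~\ref{lem: N_x(o)=}.

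\textbf{Main obstacle.} Verifying the needed membership of $\ell$ in $N_x(y_\ell)$, $N_y(x_\ell)$, $N_x(o)$ or $N_y(o)$ is the hard step, because the holders of these objects can be far away. If a direct check fails, I would set up an extremal argument. Among all counterexample triples $(i,j,o)$, choose one minimizing $\#(j,i)_N$. The agent $\ell$ together with the appropriate holder $x_{y_\ell}$ or $y_{x_\ell}$ should then form a counterexample triple with a strictly smaller interval, contradicting minimality.
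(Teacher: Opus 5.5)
Your reduction to the opposite-sides configuration and your two applications of Lemma~\ref{lem: cr_i=o iff} are correct. The contradiction you ultimately want is also the right one: Lemma~\ref{lem: NE} giving $x_\ell\mathbin{P_\ell^0}o$ and $y_\ell\mathbin{P_\ell^0}o$, against Lemma~\ref{lem: single-peaked}. There is, however, a genuine gap. You assert that an agent $\ell\in(j,i)_N$ with $x_\ell<o<y_\ell$ automatically lies in $N_x(o)\cap N_y(o)$, and this is false.

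Under the contradiction hypothesis, the intervals are exactly ones that do \emph{not} span $(j,i)_N$. By Lemmas~\ref{lem: U,L} and~\ref{lem: N_x(o)=}, $N_x(o)=[j,U]_N$ with $\omega_U<\omega_i$, since $i\notin N_x(o)$. Likewise $N_y(o)=[L,i]_N$ with $\omega_j<\omega_L$, since $j\notin N_y(o)$. So their intersection is only $[L,U]_N$, and a pigeonhole over all of $(j,i)_N$ can return an $\ell$ outside it.

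Concretely, take $\omega_k=o_k$ for $k=1,\ldots,6$, $x=(o_4,o_3,o_1,o_2,o_5,o_6)$, $y=(o_1,o_2,o_4,o_5,o_6,o_3)$, $o=o_3$, $j=2$, $i=6$. Every counting fact you derive holds, and $\ell=4$ satisfies $x_4<o<y_4$, yet $4\notin N_x(o)=[2,3]_N$. For this $\ell$ you learn only $y_4\mathbin{P_4^0}o$, which is compatible with single-peakedness. This pair is of course not an equilibrium pair; the point is that the information you use does not force $\ell$ into the intervals. Your fallback routes (showing $y_\ell\in O_\ell(x)$ and $x_\ell\in O_\ell(y)$, or a minimal-counterexample induction) are not carried out. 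As written, the argument therefore stops short of a contradiction.

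The missing idea is to run the pigeonhole on the sets cut off by $U$ and $L$, not on all of $(j,i)_N$. Let $S_x=\{\ell\in N:\omega_j<\omega_\ell\le\omega_U,\ x_\ell<o\}$ and $S_y=\{\ell\in N:\omega_L\le\omega_\ell<\omega_i,\ o<y_\ell\}$. By the defining equalities in Lemma~\ref{lem: U,L}, $\#S_x=\#(\omega_j,o]_O$ and $\#S_y=\#(o,\omega_i]_O$. Both sets lie in $(j,i)_N$, and $\#S_x+\#S_y=\#(\omega_j,\omega_i]_O=\#(j,i)_N+1$. Hence there is $\ell\in S_x\cap S_y$. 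This $\ell$ satisfies $x_\ell<o<y_\ell$ and $\omega_L\le\omega_\ell\le\omega_U$, i.e.\ $\ell\in N_x(o)\cap N_y(o)$, so your intended contradiction now goes through. This is exactly the paper's proof, written there for the mirror case $\omega_i<o<\omega_j$, and it makes the extremal argument unnecessary.
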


    \begin{prf}
        Suppose instead that $o\notin O_i(x)$ and $o\notin O_j(y)$, that is $i\notin N_x(o)$ and $j\notin N_y(o)$. 
Without loss of generality, after interchanging $(x,j)$ and $(y,i)$ if necessary, assume $\omega_i<\omega_j$.
If $o\le\omega_i$, by Lemmas~\ref{lem: U,L} and \ref{lem: N_x(o)=}, $N_x(o)=[L,j]_N$ and $\omega_L\le o\le\omega_i<\omega_j$,
contradicting $i\notin N_x(o)$.
If $\omega_j\le o$, by Lemmas~\ref{lem: U,L} and \ref{lem: N_x(o)=}, $N_y(o)=[i,U]_N$ and $\omega_i<\omega_j\le o\le\omega_U$,
contradicting $j\notin N_y(o)$.
Hence $\omega_i<o<\omega_j$.
By Lemmas~\ref{lem: U,L} and \ref{lem: N_x(o)=},
$i\notin N_x(o)$, and $j\notin N_y(o)$, we have 
$N_x(o)=[L,j]_N$, $N_y(o)=[i,U]_N$, and 
$\omega_i<\omega_L\le o\le\omega_U<\omega_j$.
Define
\begin{equation*}
S_x=\{\ell\in N:\omega_L\le\omega_\ell<\omega_j,\ o<x_\ell\},\qquad
S_y=\{\ell\in N:\omega_i<\omega_\ell\le\omega_U,\ y_\ell<o\}.
\end{equation*}
Since  $\omega_i<\omega_L\le o\le\omega_U<\omega_j$, we have  $S_x\cup S_y\subseteq(i,j)_N$.
By the definitions of $L$ and $U$ in Lemma~\ref{lem: U,L},  $\#S_x=\#(o,\omega_j]_O$ and $\#S_y=\#(\omega_i,o]_O$.
Since $\omega_i<o<\omega_j$ and $\omega\in\mathcal{X}$,
\begin{equation*}
\#S_x+\#S_y=\#(\omega_i,\omega_j]_O=\#(i,j)_N+1.
\end{equation*}
Thus, by $S_x\cup S_y\subseteq(i,j)_N$,
\begin{equation*}
\#(S_x\cap S_y)=\#S_x+\#S_y-\#(S_x\cup S_y)
\ge\#(i,j)_N+1-\#(i,j)_N=1.
\end{equation*}
Choose $\ell\in S_x\cap S_y$. By the definitions of $S_x$ and $S_y$, $y_\ell<o<x_\ell$ and $\ell\in N_x(o)\cap N_y(o)$.
By the definition of $N_x(o)$ and $N_y(o)$, $o\in O_\ell(x)\cap O_\ell(y)$.
By Lemma~\ref{lem: NE} $x_\ell \mathbin{R_\ell^0} o$ and $y_\ell \mathbin{R_\ell^0} o$.
Since $y_\ell<o<x_\ell$, we have $x_\ell \mathbin{P_\ell^0} o$ and $y_\ell \mathbin{P_\ell^0} o$,
contrary to Lemma~\ref{lem: single-peaked}.\qed
\end{prf}

For $x,y\in\mathcal{X}$, define the permutation $\pi:N\to N$ by
$\pi(i)=x_{y_i}$ for each $i\in N$.
A \emph{cycle} $C$ of $\pi$ is a set of agents that can be ordered so that
$\pi$ maps each agent to the next and the last agent to the first.

\begin{lem}\label{lem: cycle}
    Let $x,y\in \mathcal{X}\bigl(\NE(\Gamma^{\CR},R^0)\bigr)$ and let $C$ be a cycle of $\pi$ with $\#C\ge2$.
Then either $x_i\mathbin{P_i^0} y_i$ for each $i\in C$ or $y_i\mathbin{P_i^0} x_i$ for each $i\in C$.
\end{lem}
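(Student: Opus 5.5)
Throughout, $\pi(i)=x_{y_i}$ is the agent who holds, under $x$, the object that $i$ receives under $y$. So for each $i\in C$ we have $y_i=x_{\pi(i)}$, and $\pi(i)\neq i$ because $\#C\ge2$ and $C$ is a cycle. In particular $x_i\neq y_i$ for every $i\in C$. Since preferences are strict, for each $i\in C$ exactly one of $x_i\mathbin{P_i^0}y_i$ or $y_i\mathbin{P_i^0}x_i$ holds. The statement therefore says that the direction of strict preference is constant along the cycle. It is enough to rule out two consecutive agents in the cyclic order with opposite directions. Concretely, I will show that for each $i\in C$, with $j=\pi(i)$:
\begin{equation*}
y_j\mathbin{P_j^0}x_j \implies y_i\mathbin{P_i^0}x_i .
\end{equation*}
If this implication holds along every edge, the set of agents in $C$ who strictly prefer $y$ is closed under $\pi^{-1}$. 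Since $C$ is a single cycle, that set is either empty or all of $C$, which gives the claim.

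To prove the implication, fix $i\in C$, let $j=\pi(i)\ne i$ and $o=y_i=x_j$, and invoke Lemma~\ref{lem: 7, x or y}. It gives $o\in O_i(x)$ or $o\in O_j(y)$.

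\emph{Case $o\in O_i(x)$.} Lemma~\ref{lem: NE} yields $x_i\mathbin{R_i^0}o=y_i$. Since $x_i\ne y_i$, this means $x_i\mathbin{P_i^0}y_i$.

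\emph{Case $o\in O_j(y)$.} Lemma~\ref{lem: NE} applied to $y$ yields $y_j\mathbin{R_j^0}o=x_j$, hence $y_j\mathbin{P_j^0}x_j$.

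So for each edge $i\to\pi(i)$, at least one of two things holds: $i$ strictly prefers $x$, or $\pi(i)$ strictly prefers $y$.

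This edge property does not directly give the implication above; it gives a forbidden pattern instead. An edge is forbidden when $i$ prefers $y$ and $\pi(i)$ prefers $x$. Suppose the directions are not constant on $C$. Go around the cycle. Somewhere there must be a consecutive pair where the direction switches from "prefers $y$" to "prefers $x$", because a non-constant cyclic sequence of two labels contains both kinds of switch. That pair is an edge $i\to\pi(i)$ with $y_i\mathbin{P_i^0}x_i$ and $x_{\pi(i)}\mathbin{P_{\pi(i)}^0}y_{\pi(i)}$, which violates the edge property. Hence the directions are constant on $C$. (This is really the contrapositive of the forward implication: $y_j\mathbin{P_j^0}x_j$ fails means $j$ prefers $x$, and then $i$ must prefer $x$ too. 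Both readings give the same conclusion.)

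The main obstacle is already handled by Lemma~\ref{lem: 7, x or y}; the rest is bookkeeping. The points to get right are the orientation of $\pi$, namely that $y_i=x_{\pi(i)}$, so the lemma applies with $j=\pi(i)$; the fact that $i\ne\pi(i)$, which is where $\#C\ge2$ is used; and the use of strictness to turn $R^0$ into $P^0$.
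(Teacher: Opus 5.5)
Your proof is correct and is essentially the paper's argument. Both apply Lemma~\ref{lem: 7, x or y} to $y_i=x_{\pi(i)}$, together with Lemma~\ref{lem: NE} for $x$ and for $y$, to show that $y_i\mathbin{P_i^0}x_i$ forces $y_{\pi(i)}\mathbin{P_{\pi(i)}^0}x_{\pi(i)}$, and then propagate around $C$. The implication you first announce has the edge orientation reversed, and the parenthetical mislabels which contrapositive is meant, but your switch argument is exactly the needed forward propagation, so nothing is missing.
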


\begin{prf}
    For $i\in C$, $\pi(i)\ne i$, so $y_i=x_{\pi(i)}\ne x_i$.
Suppose $y_i\mathbin{P_i^0} x_i$ for one $i\in C$.
Lemma~\ref{lem: NE} implies $y_i\notin O_i(x)$.
Lemma~\ref{lem: 7, x or y}, applied to $y_i=x_{\pi(i)}$, then gives
$x_{\pi(i)}\in O_{\pi(i)}(y)$.
By Lemma~\ref{lem: NE}, $y_{\pi(i)}\mathbin{R_{\pi(i)}^0}x_{\pi(i)}$.
Since $y_{\pi(i)}\ne x_{\pi(i)}$, $y_{\pi(i)}\mathbin{P_{\pi(i)}^0}x_{\pi(i)}$.
Repeating this argument around $C$ gives
$y_\ell \mathbin{P_\ell^0} x_\ell$ for each $\ell\in C$.
Otherwise, by $x_\ell\ne y_\ell$, $x_\ell \mathbin{P_\ell^0} y_\ell$ for each $\ell\in C$.\qed
\end{prf}

\begin{lem}\label{lem: 9, N_y,N_x}
    Let $x\in\mathcal{X}\bigl(\NE(\Gamma^{\CR},R^0)\bigr)$ and $y\in\mathcal{X}$ satisfy
$y_i\mathbin{R_i^0}x_i$ for each $i\in N$.
If $x_j=y_j=o$ for some $j\in N$ and $o\in O$, then $N_y(o)\subseteq N_x(o)$.
\end{lem}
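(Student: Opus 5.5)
The plan is to use the explicit description of $N_x(o)$ and $N_y(o)$ from Lemma~\ref{lem: N_x(o)=}, which depends only on the position of $o$ relative to $\omega_j$. Since $x_j=y_j=o$, the agent $j=x_o=y_o$ is the same in both descriptions, so the two sets fall into the same case. If $\omega_j=o$, both sets equal $\{j\}$ and there is nothing to prove.

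\textbf{Case $\omega_j<o$.} Write $U_x$ and $U_y$ for the agents of Lemma~\ref{lem: U,L} computed from $x$ and from $y$. Then $N_x(o)=[j,U_x]_N$ and $N_y(o)=[j,U_y]_N$, so it suffices to show $\omega_{U_y}\le\omega_{U_x}$.

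The key pointwise claim is: for each $\ell\in(j,U_x]_N$ with $x_\ell<o$, we also have $y_\ell<o$. To prove it, note that $\ell\in N_x(o)$, so $o\in O_\ell(x)$. Lemma~\ref{lem: NE} then gives $x_\ell\mathbin{R_\ell^0}o$, and this is strict because $x_\ell\ne o$. By hypothesis $y_\ell\mathbin{R_\ell^0}x_\ell$, hence $y_\ell\mathbin{P_\ell^0}o$. Also $y_\ell\ne o$, since $o=y_j$ and $\ell\ne j$. If $y_\ell>o$, then $x_\ell<o<y_\ell$ with both $x_\ell\mathbin{P_\ell^0}o$ and $y_\ell\mathbin{P_\ell^0}o$, which contradicts Lemma~\ref{lem: single-peaked}. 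So $y_\ell<o$.

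The claim gives
\begin{equation*}
\#\{\ell\in N:\omega_j<\omega_\ell\le\omega_{U_x},\ y_\ell<o\}\ \ge\ \#\{\ell\in N:\omega_j<\omega_\ell\le\omega_{U_x},\ x_\ell<o\}=\#(\omega_j,o]_O.
\end{equation*}
Now run the counting argument from the proof of Lemma~\ref{lem: U,L} for $y$. As $i$ moves upward from $j$, the count $\#\{\ell:\omega_j<\omega_\ell\le\omega_i,\ y_\ell<o\}$ starts at $0$ and increases by steps of $0$ or $1$. It has reached at least $\#(\omega_j,o]_O$ by $i=U_x$, so it first equals that value at some agent with endowment $\le\omega_{U_x}$. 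That agent is $U_y$, so $\omega_{U_y}\le\omega_{U_x}$ and $N_y(o)\subseteq N_x(o)$.

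\textbf{Case $o<\omega_j$.} This is the mirror image, with $L_x$ and $L_y$ in place of $U_x$ and $U_y$. For $\ell\in[L_x,j)_N$ with $o<x_\ell$, the same argument (NE via Lemma~\ref{lem: NE}, Pareto domination, and Lemma~\ref{lem: single-peaked}) shows $o<y_\ell$. Therefore the $y$-count over $[L_x,j)_N$ reaches $\#(o,\omega_j]_O$. Scanning downward from $j$ as in Lemma~\ref{lem: U,L} then gives $\omega_{L_x}\le\omega_{L_y}$, hence $[L_y,j]_N\subseteq[L_x,j]_N$.

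The only substantive step is the pointwise claim that $x_\ell$ and $y_\ell$ lie on the same side of $o$. Everything else is bookkeeping with the monotone counts already set up in Lemma~\ref{lem: U,L}.
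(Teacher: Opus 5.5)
Your proof is correct and follows the paper's argument. The same-side claim (via Lemma~\ref{lem: NE}, the domination $y_\ell\mathbin{R_\ell^0}x_\ell$, and Lemma~\ref{lem: single-peaked}) and the resulting count inequality over $(j,U_x]_N$ (resp.\ $[L_x,j)_N$) are exactly what the paper uses; the only cosmetic difference is the last step, where you read off $\omega_{U_y}\le\omega_{U_x}$ from the minimality defining $U_y$, whereas the paper takes $i\in N_y(o)$ with $\omega_{U_x}<\omega_i$ and contradicts the criterion of Lemma~\ref{lem: cr_i=o iff}.
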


\begin{prf}

Let $\ell\in N_x(o)\setminus\{j\}$. Then $o\in O_\ell(x)$, so
Lemma~\ref{lem: NE} gives $x_\ell \mathbin{R_\ell^0} o$.
Since $x\in\mathcal{X}$ and $\ell\ne j$, we have $x_\ell\ne x_j=o$.
By $y_\ell R_\ell^0 x_\ell$ and transitivity, $x_\ell \mathbin{P_\ell^0} o$ and $y_\ell \mathbin{P_\ell^0} o$.
Lemma~\ref{lem: single-peaked} rules out
$x_\ell<o<y_\ell$ and $y_\ell<o<x_\ell$.
Since $x_\ell\ne o$ and $y_\ell\ne o$, for each
$\ell\in N_x(o)\setminus\{j\}$,
$y_\ell<o$ if $x_\ell<o$, and 
$o<y_\ell$ if $o<x_\ell$.

\medskip
\noindent\textsc{Case 1:} $\omega_j<o$.
By Lemma~\ref{lem: N_x(o)=}, write $N_x(o)=[j,U]_N$.
For each $\ell\in[j,U]_N\setminus\{j\}$ with $x_\ell<o$, we have $y_\ell<o$.
Thus $\{\ell\in N:\omega_j<\omega_\ell\le\omega_U,\ x_\ell<o\} \subseteq\{\ell\in N:\omega_j<\omega_\ell\le\omega_U,\ y_\ell<o\}$.
By the definition of $U$ in Lemma~\ref{lem: U,L},
\begin{align*}
\#\{\ell\in N:\omega_j<\omega_\ell\le\omega_U,\ y_\ell<o\}
&\ge\#\{\ell\in N:\omega_j<\omega_\ell\le\omega_U,\ x_\ell<o\}\\
&=\#(\omega_j,o]_O.
\end{align*}
If $i\in N_y(o)$, Lemma~\ref{lem: N_x(o)=} applied to $y$ gives
$\omega_j\le\omega_i$.
If $\omega_U<\omega_i$, this inequality gives
\begin{equation*}
\#\{\ell\in(j,i)_N:y_\ell<o\}\ge\#(\omega_j,o]_O.
\end{equation*}
However, by $i\in N_y(o)$ and Lemmas~\ref{lem: CR=h} and
\ref{lem: cr_i=o iff}, $\#\{\ell\in(j,i)_N:y_\ell<o\}<\#(\omega_j,o]_O$,
a contradiction.
Thus $\omega_j\le\omega_i\le\omega_U$, and $i\in N_x(o)$.

\medskip
\noindent\textsc{Case 2:} $o<\omega_j$.
Write $N_x(o)=[L,j]_N$ by Lemma~\ref{lem: N_x(o)=}.
For each $\ell\in[L,j]_N\setminus\{j\}$ with $o<x_\ell$,
we also have $o<y_\ell$. By the definition of $L$,
\begin{align*}
\#\{\ell\in N:\omega_L\le\omega_\ell<\omega_j,\ o<y_\ell\}
&\ge\#\{\ell\in N:\omega_L\le\omega_\ell<\omega_j,\ o<x_\ell\}\\
&=\#(o,\omega_j]_O.
\end{align*}
If $i\in N_y(o)$, Lemma~\ref{lem: N_x(o)=} gives $\omega_i\le\omega_j$.
If $\omega_i<\omega_L$, this inequality gives
\begin{equation*}
\#\{\ell\in(i,j)_N:o<y_\ell\}\ge\#(o,\omega_j]_O.
\end{equation*}
However $\omega_i<\omega_j$ and $i\in N_y(o)$.
By \textsc{Case 2} of the proof of Lemma~\ref{lem: N_x(o)=}, applied to $y$,
\begin{equation*}
\#\{\ell\in(i,j)_N:o<y_\ell\}<\#(o,\omega_j]_O.
\end{equation*}
This is a contradiction. Thus $\omega_L\le\omega_i\le\omega_j$,
and $i\in N_x(o)$.

\medskip
\noindent\textsc{Case 3:} $\omega_j=o$.
Since $x_j=y_j=o$, by Lemma~\ref{lem: N_x(o)=},
$N_y(o)=\{j\}=N_x(o)$.\qed
\end{prf}

\begin{lem}\label{lem: 10 N_x,N_y}
    Let $x,y\in \mathcal{X}\bigl(\NE(\Gamma^{\CR},R^0)\bigr)$, $o\in O$, and $j=x_o$.
If $j\in N_y(o)$, then $N_x(o)\subseteq N_y(o)$.
\end{lem}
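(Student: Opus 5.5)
The plan is to reduce the inclusion to a comparison of the interval descriptions in Lemma~\ref{lem: N_x(o)=}. Let $k=y_o$. By that lemma, $N_x(o)$ is $[j,U]_N$, $[L,j]_N$ or $\{j\}$, according to the position of $\omega_j$ relative to $o$. Likewise $N_y(o)$ is an interval of agents with endpoint $k$, extending from $k$ toward $o$ and reaching at least $o$ (by Lemma~\ref{lem: U,L}). If $\omega_j=o$, then $N_x(o)=\{j\}\subseteq N_y(o)$ by hypothesis, so nothing is to be shown. The cases $\omega_j<o$ and $o<\omega_j$ are mirror images, and I would write out only $\omega_j<o$, where $N_x(o)=[j,U]_N$ and $o\le\omega_U$. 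Since $N_y(o)$ is an interval containing $j$, it suffices to show that $U\in N_y(o)$.

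The key transfer step is the one used in Lemma~\ref{lem: 9, N_y,N_x}, now applied to agents lying in both sets. Suppose $\ell\in N_x(o)\cap N_y(o)$ with $\ell\notin\{j,k\}$. By Lemma~\ref{lem: NE}, both $x_\ell\mathbin{P^0_\ell}o$ and $y_\ell\mathbin{P^0_\ell}o$ hold (strictly, since $x_\ell,y_\ell\ne o$). Lemma~\ref{lem: single-peaked} then forces $x_\ell$ and $y_\ell$ to lie on the same side of $o$. In particular, $x_\ell<o$ implies $y_\ell<o$ for every such $\ell$. The agent $k$ itself also satisfies $y_k=o$ and $k\in N_y(o)$, so it has to be handled separately in the counts.

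I would then proceed by induction outward from $j$ along $[j,U]_N$ in increasing order of endowments. Suppose all agents $m$ with $\omega_j\le\omega_m<\omega_i$ lie in $N_y(o)$, where $\omega_i\le\omega_U$. I want to show $i\in N_y(o)$, splitting by the position of $k$.

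\emph{(a) $\omega_k\le\omega_j$.} Then $N_y(o)=[k,U']_N$ for some $U'$ (or $N_y(o)=\{k\}$ with $k=j$). Apply the characterization of Lemma~\ref{lem: cr_i=o iff} with $y$ and base agent $k$: we need $\#\{\ell\in(k,i)_N: y_\ell<o\}<\#(\omega_k,o]_O$. I would split $(k,i)_N$ into $(k,j]_N$ and $(j,i)_N$. The part $(k,j]_N$ already satisfies the strict inequality at $j$ because $j\in N_y(o)$. The part $(j,i)_N$ consists of agents in both sets, so by the transfer step its $y$-count below $o$ is at least its $x$-count. But I need an upper bound on the $y$-count, so I would instead argue by contradiction. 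If $i\notin N_y(o)$, then $\omega_{U'}<\omega_i\le\omega_U$, and the defining equality of $U'$ supplies many agents in $(j,U']$ with $y_\ell<o$. Each of these is in $N_x(o)$ and is preferred to $o$. Counting against $\#(\omega_j,o]_O$, as in the proof of Lemma~\ref{lem: 7, x or y}, yields an agent $\ell$ with $y_\ell<o<x_\ell$ lying in both sets, contradicting Lemma~\ref{lem: single-peaked}.

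\emph{(b) $\omega_j<\omega_k$.} Then $o<\omega_k$ is impossible unless $N_y(o)=[L',k]_N$ with $\omega_{L'}\le\omega_j$. In that subcase I would show $\omega_U\le\omega_k$ by a similar counting comparison between $S_x=\{\ell:\omega_j<\omega_\ell\le\omega_U,\ x_\ell<o\}$ and the $L'$-set $\{\ell:\omega_{L'}\le\omega_\ell<\omega_k,\ o<y_\ell\}$. Their sizes sum to more than the number of available agents, so they share an agent $\ell$ with $x_\ell<o<y_\ell$, and again Lemma~\ref{lem: single-peaked} is violated. If instead $\omega_k<o$, then $N_y(o)=[k,U']_N$ contains $j$ with $\omega_j<\omega_k$, which is impossible, so this subcase cannot occur.

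The main obstacle I expect is this bookkeeping: getting the two counting sets to overlap inside $N_x(o)\cap N_y(o)$, with the endpoint agent $k$ (where $y_k=o$) excluded, so that the single-peakedness contradiction actually applies. I would model the counting on the proof of Lemma~\ref{lem: 7, x or y}, which has exactly this shape.
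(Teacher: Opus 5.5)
\textbf{There is a gap in case (b).} Your reduction to $U\in N_y(o)$ is sound and matches the paper, and so is your case (a). There, if $\omega_{U'}<\omega_U$, the defining equality of $U'$ gives at least $\#(\omega_k,o]_O-\#(k,j]_N=\#(\omega_j,o]_O$ agents $\ell\in(j,U']_N$ with $y_\ell<o$. These lie in $N_x(o)\cap N_y(o)\setminus\{j,k\}$, hence have $x_\ell<o$, contradicting the minimality of $U$. The induction scaffolding is not needed.

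The gap is in the subcase $\omega_j<o<\omega_k$, where $N_x(o)=[j,U]_N$ and $N_y(o)=[L',k]_N$ with $\omega_{L'}\le\omega_j$. Your union bound between $S_x$ and $T=\{\ell:\omega_{L'}\le\omega_\ell<\omega_k,\ o<y_\ell\}$, modeled on Lemma~\ref{lem: 7, x or y}, does not go through. One has $\#S_x+\#T=\#(\omega_j,\omega_k]_O=\#(j,k)_N+1$, but this forces an overlap only if $S_x\cup T\subseteq(j,k)_N$. In Lemma~\ref{lem: 7, x or y} that containment came from the \emph{non}-memberships $i\notin N_x(o)$ and $j\notin N_y(o)$. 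Here you have the opposite memberships, and the containment fails:
\begin{itemize}
\item Under your contradiction hypothesis $\omega_k<\omega_U$, the set $S_x$ may contain agents in $[k,U]_N$.
\item Independently of that hypothesis, maximality of $L'$ together with $\omega_{L'}\le\omega_j$ gives $\#\{\ell\in(j,k)_N:o<y_\ell\}<\#(o,\omega_k]_O=\#T$. So $T$ always meets $[L',j]_N$.
\end{itemize}
The configuration $x_\ell<o<y_\ell$ you aim for is also the wrong one in this case.

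\textbf{The missing idea is Lemma~\ref{lem: cycle}, which the paper uses exactly here.} Since $\omega_j<o<\omega_k$, we have $j\ne k$. As $j\in N_y(o)$, Lemma~\ref{lem: NE} gives $y_j\mathbin{P^0_j}x_j$. Since $\pi(k)=x_{y_k}=x_o=j$, agents $j$ and $k$ lie in one cycle, so $y_k\mathbin{P^0_k}x_k$, i.e.\ $o\mathbin{P^0_k}x_k$. Lemma~\ref{lem: NE} then yields $k\notin N_x(o)$. Because $\omega_j<\omega_k$ and $N_x(o)=[j,U]_N$, this gives $\omega_U<\omega_k$ at once, hence $[j,U]_N\subseteq[L',k]_N$.

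If you prefer to stay purely with counting, compare counts on $(j,k)_N$ rather than taking a union bound. If $k\in N_x(o)$, Lemmas~\ref{lem: CR=h} and~\ref{lem: cr_i=o iff} give $\#\{\ell\in(j,k)_N:x_\ell<o\}<\#(\omega_j,o]_O$. Meanwhile $j\in N_y(o)$ gives $\#\{\ell\in(j,k)_N:y_\ell<o\}\ge\#(\omega_j,o]_O$. Every $\ell\in(j,k)_N$ then lies in $N_x(o)\cap N_y(o)\setminus\{j,k\}$, so by your transfer step the two counts coincide, a contradiction. The offending agent has $y_\ell<o<x_\ell$.

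Finally, case (b) omits $\omega_k=o$. It is excluded because then $N_y(o)=\{k\}$ cannot contain $j\ne k$.
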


\begin{prf}
    Let $k=y_o$.
    For the allocation $x$ and object $o$, write $U_x$ for the agent $U$
in Lemma~\ref{lem: U,L} if $\omega_j<o$, and $L_x$ for the agent $L$
if $o<\omega_j$.
For the allocation $y$ and the same object $o$, write $U_y$ for $U$
if $\omega_k<o$, and $L_y$ for $L$ if $o<\omega_k$.
For $\ell\in N_x(o)\cap N_y(o)\setminus\{j,k\}$, by 
Lemma~\ref{lem: NE}, $x_\ell \mathbin{P_\ell^0} o$ and $y_\ell \mathbin{P_\ell^0} o$.
By Lemma~\ref{lem: single-peaked}, either $x_\ell<o$ and $y_\ell<o$, or $o<x_\ell$ and $o<y_\ell$.

If $j\ne k$, then $j\in N_y(o)$ and $x_j=o\ne y_j$ give
$y_jP_jx_j$. 
Since $y_k=x_j$, agents $j,k$ belong to the same cycle
of the permutation in Lemma~\ref{lem: cycle}. 
Hence $y_k\mathbin{P_k^0}x_k$ and $y_k=o$.
Since $x\in \mathcal{X}\bigl(\NE(\Gamma^{\CR},R^0)\bigr)$, by Lemma~\ref{lem: NE},  $k\notin N_x(o)$.

\medskip
\noindent\textsc{Case 1:} $\omega_j<o$.
By Lemmas~\ref{lem: U,L} and \ref{lem: N_x(o)=}, 
$N_x(o)=[j,U_x]_N$ and $o\le\omega_{U_x}$.

If $\omega_k<o$, write $N_y(o)=[k,U_y]_N$.
Since $j\in N_y(o)$, we have $\omega_k\le\omega_j$.
Suppose $\omega_{U_y}<\omega_{U_x}$.
Each $\ell\in(j,U_y]_N$ belongs to $N_x(o)\cap N_y(o)\setminus\{j,k\}$,
so
\begin{align*}
\#\{\ell\in(j,U_y]_N:x_\ell<o\}
&=\#\{\ell\in(j,U_y]_N:y_\ell<o\}\\
&=\#\{\ell\in(k,U_y]_N:y_\ell<o\}
  -\#\{\ell\in(k,j]_N:y_\ell<o\}\\
&\ge\#(\omega_k,o]_O-\#(k,j]_N\\
&=\#(\omega_k,o]_O-\#(\omega_k,\omega_j]_O\\
&=\#(\omega_j,o]_O.
\end{align*}
However, by the definition of $U_x$ and $\omega_{U_y}<\omega_{U_x}$, $\#\{\ell\in(j,U_y]_N:x_\ell<o\}<\#(\omega_j,o]_O$,
a contradiction. 
Thus $\omega_{U_x}\le\omega_{U_y}$, and
$N_x(o)=[j,U_x]_N\subseteq[k,U_y]_N=N_y(o)$.

If $o<\omega_k$, write $N_y(o)=[L_y,k]_N$.
Since $j\in N_y(o)$, $k\notin N_x(o)$, and $o\le\omega_{U_x}$, we have 
$\omega_{L_y}\le\omega_j$ and $\omega_{U_x}<\omega_k$.
Hence $N_x(o)=[j,U_x]_N\subseteq[L_y,k]_N=N_y(o)$.
If $\omega_k=o$, then $N_y(o)=\{k\}$ and $j\in N_y(o)$ give $j=k$,
contrary to $\omega_j<o$.

\medskip
\noindent\textsc{Case 2:} $o<\omega_j$.
By Lemmas~\ref{lem: U,L} and \ref{lem: N_x(o)=}
$N_x(o)=[L_x,j]_N$ and $\omega_{L_x}\le o$.
If $o<\omega_k$, write $N_y(o)=[L_y,k]_N$.
Since $j\in N_y(o)$, we have $\omega_j\le\omega_k$.
Suppose $\omega_{L_x}<\omega_{L_y}$.
Each $\ell\in[L_y,j)_N$ belongs to $N_x(o)\cap N_y(o)\setminus\{j,k\}$,
so
\begin{align*}
\#\{\ell\in[L_y,j)_N:o<x_\ell\}
&=\#\{\ell\in[L_y,j)_N:o<y_\ell\}\\
&=\#\{\ell\in[L_y,k)_N:o<y_\ell\}
  -\#\{\ell\in[j,k)_N:o<y_\ell\}\\
&\ge\#(o,\omega_k]_O-\#[j,k)_N\\
&=\#(o,\omega_k]_O-\#(\omega_j,\omega_k]_O\\
&=\#(o,\omega_j]_O.
\end{align*}
However, by the definition of $L_x$ and $\omega_{L_x}<\omega_{L_y}$,  $\#\{\ell\in[L_y,j)_N:o<x_\ell\}<\#(o,\omega_j]_O$,
a contradiction. 
Thus $\omega_{L_y}\le\omega_{L_x}$, and $N_x(o)=[L_x,j]_N\subseteq[L_y,k]_N=N_y(o)$.

If $\omega_k<o$, write $N_y(o)=[k,U_y]_N$.
Since $j\in N_y(o)$, $k\notin N_x(o)$, and $\omega_{L_x}\le o$, $\omega_j\le\omega_{U_y}$ and  $\omega_k<\omega_{L_x}$.
Hence $N_x(o)=[L_x,j]_N\subseteq[k,U_y]_N=N_y(o)$.

If $\omega_k=o$, then $N_y(o)=\{k\}$ and $j\in N_y(o)$ give $j=k$,
contrary to $o<\omega_j$.

\medskip
\noindent\textsc{Case 3:} $\omega_j=o$.
By Lemma~\ref{lem: N_x(o)=}, $N_x(o)=\{j\}\subseteq N_y(o)$.\qed
\end{prf}

\begin{prop}\label{prop: join}
    For each $x,y\in \mathcal{X}\bigl(\NE(\Gamma^{\CR},R^0)\bigr)$, $x\vee y\in\mathcal{X}\bigl(\NE(\Gamma^{\CR},R^0)\bigr)$.
\end{prop}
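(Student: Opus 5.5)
The argument has two parts. First I show that $z=x\vee y$ is an allocation. Then I verify the Nash equilibrium characterization of Lemma~\ref{lem: NE} for $z$, agent by agent.

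\emph{Feasibility.} Take the permutation $\pi(i)=x_{y_i}$ and decompose $N$ into its cycles. On a fixed point of $\pi$ we have $x_i=y_i$, so $z_i=x_i$. On a cycle $C$ with $\#C\ge2$, Lemma~\ref{lem: cycle} says that either every agent in $C$ strictly prefers $x$, or every agent strictly prefers $y$. In the first case $z$ equals $x$ on $C$; in the second it equals $y$ on $C$. Both $x$ and $y$ assign the set of objects $\{x_i:i\in C\}=\{y_i:i\in C\}$ to $C$ bijectively. Hence $z$ assigns every object exactly once, so $z\in\mathcal{X}$. Moreover $z_i\mathbin{R_i^0}x_i$ and $z_i\mathbin{R_i^0}y_i$ for each $i$.

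\emph{Equilibrium.} By Lemma~\ref{lem: NE}, it suffices to show $z_i\mathbin{R_i^0}o$ for each $i$ and each $o\in O_i(z)$, that is, for each $i\in N_z(o)$. Fix $o$ and let $m=z_o$. By construction $m$ receives $o$ in $x$ or in $y$; by symmetry say $x_m=o$, so $z_m=x_m$. The allocation $z$ weakly Pareto dominates the equilibrium allocation $x$ and agrees with it at $m$. Lemma~\ref{lem: 9, N_y,N_x}, applied with $z$ in the role of $y$, therefore gives $N_z(o)\subseteq N_x(o)$. Now take $i\in N_z(o)$. Then $o\in O_i(x)$, so Lemma~\ref{lem: NE} for $x$ gives $x_i\mathbin{R_i^0}o$. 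Combined with $z_i\mathbin{R_i^0}x_i$, transitivity yields $z_i\mathbin{R_i^0}o$. This holds for every $o$ and every $i$, so $z\in\mathcal{X}\bigl(\NE(\Gamma^{\CR},R^0)\bigr)$.

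\emph{Main obstacle.} The real work lies in Lemma~\ref{lem: cycle}, which gives feasibility, and in Lemma~\ref{lem: 9, N_y,N_x}, which gives monotonicity of $N_\cdot(o)$ in the holder's ownership. Both are already available, so here the only delicate point is the bookkeeping. For each object $o$ I must use whichever of $x,y$ assigns $o$ to the same agent as $z$. Such a choice always exists because $z$ coincides with $x$ or with $y$ on each cycle.
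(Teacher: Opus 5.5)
Your proof is correct and follows the paper's argument essentially step for step. Feasibility of $x\vee y$ comes from the cycles of $\pi(i)=x_{y_i}$ together with Lemma~\ref{lem: cycle}; the equilibrium check picks whichever of $x,y$ assigns $o$ to the same agent as $z$, applies Lemma~\ref{lem: 9, N_y,N_x} to get $N_z(o)\subseteq N_x(o)$, and concludes via Lemma~\ref{lem: NE} and transitivity.
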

\begin{prf}
Let $z=x\vee y$. We first show that $z\in\mathcal{X}$. 
Define $\pi:N\to N$ by $\pi(i)=x_{y_i}$.
Since $x,y\in\mathcal{X}$, $\pi$ is a permutation, and its cycles partition $N$.
For each cycle $C$, $\pi$ permutes $C$, so
\begin{equation*}
\{y_i\in O:i\in C\}=\{x_{\pi(i)}\in O:i\in C\}=\{x_i\in O:i\in C\}.
\end{equation*}
If $\#C=1$, $y_i=x_{\pi(i)}=x_i$, and hence $z_i=x_i=y_i$.
If $\#C\ge2$, Lemma~\ref{lem: cycle} and the definition of $z$ imply
that either $z_i=x_i$ for each $i\in C$ or $z_i=y_i$ for each $i\in C$.
Thus $\{z_i\in O:i\in C\}=\{x_i\in O:i\in C\}$.
Since $x_i\ne x_j$ for $i\ne j$ and the cycles partition $N$,
$z_i\ne z_j$ for $i\ne j$. Hence $z\in\mathcal{X}$.

Next we show that $z\in\mathcal{X}\bigl(\NE(\Gamma^{\CR},R^0)\bigr)$.
Fix $i\in N$ and $o\in O_i(z)$, and let $j=z_{o}$. 
Since $z_j\in\{x_j,y_j\}$, choose $w\in\{x,y\}$ with $w_j=z_j=o$. 
By the definition of $z$, $z_\ell\mathbin{R_\ell^0}w_\ell$ for each $\ell\in N$.
Since $w\in \mathcal{X}\bigl(\NE(\Gamma^{\CR},R^0)\bigr)$, by Lemma~\ref{lem: 9, N_y,N_x}, $i\in N_z(o)\subseteq N_w(o)$.
Thus $o\in O_i(w)$, and by Lemma~\ref{lem: NE}, $w_i\mathbin{R_i^0}o$. 
By transitivity of $R_i^0$, $z_i\mathbin{R_i^0}w_i\mathbin{R_i^0}o$.
By Lemma~\ref{lem: NE}, $z\in\mathcal{X}\bigl(\NE(\Gamma^{\CR},R^0)\bigr)$.\qed
\end{prf}

\begin{prop}\label{prop: meet}
    For each $x,y\in \mathcal{X}\bigl(\NE(\Gamma^{\CR},R^0)\bigr)$, $x\wedge y\in\mathcal{X}\bigl(\NE(\Gamma^{\CR},R^0)\bigr)$.
    Moreover, $N_{x\wedge y}(o)=N_x(o)\cap N_y(o)$ for each $o\in O$, and $O_i(x\wedge y)=O_i(x)\cap O_i(y)$ for each $i\in N$.
\end{prop}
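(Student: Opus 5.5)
The plan mirrors the proof of Proposition~\ref{prop: join}. Let $z=x\wedge y$ and $\pi(i)=x_{y_i}$.

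\emph{Step 1: $z\in\mathcal{X}$.} The argument is the same as for the join. On a fixed point of $\pi$ we have $x_i=y_i=z_i$. On a cycle $C$ with $\#C\ge2$, Lemma~\ref{lem: cycle} says that one of $x,y$ is strictly better for every agent of $C$. Hence $z$ coincides on all of $C$ with the other one, and $\{z_i:i\in C\}=\{x_i:i\in C\}$. This makes $z$ a permutation of the objects.

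\emph{Step 2: intersection formula for $N_{x\wedge y}(o)$.} Fix $o$ and $j=z_o$. Choose $w\in\{x,y\}$ with $w_j=o$, and let $v$ be the other allocation, with $k=v_o$.

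If $k\ne j$, then $j$ and $k$ lie on a common cycle of length at least two on which $z=w$, so $v$ is strictly better there. In particular $v_k=o\mathbin{P_k^0}w_k$, and Lemma~\ref{lem: NE} gives $o\notin O_k(w)$. Lemma~\ref{lem: 7, x or y}, applied to $w_j=v_k=o$, then yields $o\in O_j(v)$, that is, $j\in N_v(o)$. Lemma~\ref{lem: 10 N_x,N_y} now gives $N_w(o)\subseteq N_v(o)$, so $N_x(o)\cap N_y(o)=N_w(o)$.

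If $k=j$, both $N_x(o)$ and $N_y(o)$ are intervals with the same endpoint $j$ by Lemma~\ref{lem: N_x(o)=}. Their intersection is the shorter one, namely $[j,\min(U_x,U_y)]_N$ or $[\max(L_x,L_y),j]_N$.

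\emph{Step 3: identify $N_z(o)$ with this intersection.} This step recomputes $U_z$ or $L_z$ from Lemma~\ref{lem: U,L}; I take the case $\omega_j<o$, the case $o<\omega_j$ being symmetric and $\omega_j=o$ trivial.

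First, every $\ell\in N_x(o)\cap N_y(o)\setminus\{j,k\}$ satisfies $x_\ell\mathbin{P_\ell^0}o$ and $y_\ell\mathbin{P_\ell^0}o$ by Lemma~\ref{lem: NE}. By Lemma~\ref{lem: single-peaked}, $x_\ell$ and $y_\ell$ then lie on the same side of $o$, and hence so does $z_\ell\in\{x_\ell,y_\ell\}$.

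The defining count for $U$ in Lemma~\ref{lem: U,L} is a prefix count: it increases by $0$ or $1$ as agents to the right of $j$ are added. When $z$ and $w$ (or $z$, $x$, $y$ in the case $k=j$) agree on the side of $o$ for every agent in the intersection interval, the prefix counts coincide up to its right endpoint. The first index where the count reaches $\#(\omega_j,o]_O$ is therefore the same, which gives $U_z=U_w$, respectively $U_z=U$ of the shorter interval. Lemma~\ref{lem: N_x(o)=} then yields $N_z(o)=N_x(o)\cap N_y(o)$, and $O_i(z)=O_i(x)\cap O_i(y)$ follows from the definitions.

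\emph{Step 4: $z$ is an equilibrium allocation.} Take $o\in O_i(z)$. Then $o\in O_i(x)\cap O_i(y)$, so $x_i\mathbin{R_i^0}o$ and $y_i\mathbin{R_i^0}o$ by Lemma~\ref{lem: NE}. Since $z_i\in\{x_i,y_i\}$, we get $z_i\mathbin{R_i^0}o$, and Lemma~\ref{lem: NE} gives $z\in\mathcal{X}\bigl(\NE(\Gamma^{\CR},R^0)\bigr)$.

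The main obstacle is Step 3. The difficulty is verifying carefully that the counts defining $U_z$ (or $L_z$) only involve agents in the intersection interval. Agents $j$ and $k$ must be handled separately, since $k\notin N_w(o)$ when $k\ne j$. One also has to check that reaching the threshold earlier is impossible, because the count is the same there as for $w$.
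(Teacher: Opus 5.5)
Your proposal is correct and follows the paper's proof essentially step for step. It uses the cycle argument for $z\in\mathcal{X}$, Lemmas~\ref{lem: cycle}, \ref{lem: 7, x or y} and \ref{lem: 10 N_x,N_y} to get $N_w(o)\subseteq N_v(o)$ when $k\ne j$, the prefix-count identification $U_z=U_w$ (or $L_z=L_w$) via Lemma~\ref{lem: N_x(o)=}, and the final appeal to Lemma~\ref{lem: NE}. The only deviation is the case $k=j$: the paper notes $j\in N_y(o)$ and applies Lemma~\ref{lem: 10 N_x,N_y} again, which in fact forces $N_x(o)=N_y(o)$ there, whereas you take the shorter of the two intervals directly; both routes are valid.
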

\begin{prf}
    Let $z=x\wedge y$.
    We first show that $z\in\mathcal{X}$. 
Define $\pi:N\to N$ by $\pi(i)=x_{y_i}$.
For a cycle with one agent, $x_i=y_i=z_i$.
For a cycle $C$ with $\#C\ge2$, by Lemma~\ref{lem: cycle} and the definition
of $z$, either $z_i=x_i$ for each $i\in C$
or $z_i=y_i$ for each $i\in C$.
Since $\{x_i\in O:i\in C\}=\{y_i\in O:i\in C\}$ and the cycles partition $N$,
$z_i\ne z_j$ for $i\ne j$. Hence $z\in\mathcal{X}$.

Next we show that $O_i(x\wedge y)=O_i(x)\cap O_i(y)$ for each $i\in N$. 
Fix $o\in O$ and let $j=z_{o}$.
Without loss of generality, assume $z_j=x_j=o$.
Let $k=y_{o}$.
For the allocation $x$ and object $o$, let $U_x$ and $L_x$ denote the agents
defined in Lemma~\ref{lem: U,L} in their respective cases.
If $j=k$, then $j\in N_y(o)$.
If $j\ne k$, by the definition of $z$, $y_j\mathbin{P_j^0}x_j$.
Since $y_k=x_j$, by Lemma~\ref{lem: cycle}, $y_k\mathbin{P_k^0}x_k$ and $y_k=o$.
By Lemma~\ref{lem: NE}, $k\notin N_x(o)$.
Applying Lemma~\ref{lem: 7, x or y} to $y_k=x_j=o$ gives
$j\in N_y(o)$. 
Thus, in both cases, by Lemma~\ref{lem: 10 N_x,N_y}, $N_x(o)\subseteq N_y(o)$.
Let $\ell\in N_x(o)\setminus\{j\}$.
Then $\ell\in N_y(o)$ and $\ell\ne k$.
By Lemma~\ref{lem: NE}, $x_\ell P_\ell^0 o$ and $y_\ell \mathbin{P_\ell^0} o$.
By Lemma~\ref{lem: single-peaked} and $z_\ell\in\{x_\ell,y_\ell\}$,
either $x_\ell<o$, $y_\ell<o$, and $z_\ell<o$,
or $o<x_\ell$, $o<y_\ell$, and $o<z_\ell$.

If $\omega_j<o$, write $N_x(o)=[j,U_x]_N$.
For each $i\in[j,U_x]_N$,
\begin{equation*}
\{\ell\in(j,i]_N:z_\ell<o\}
=\{\ell\in(j,i]_N:x_\ell<o\}.
\end{equation*}
Hence
\begin{equation*}
\#\{\ell\in(j,i]_N:z_\ell<o\}<\#(\omega_j,o]_O
\end{equation*}
for each $i\in[j,U_x)_N$, and
\begin{equation*}
\#\{\ell\in(j,U_x]_N:z_\ell<o\}
=\#(\omega_j,o]_O.
\end{equation*}
Since $z_j=x_j=o$, by Lemma~\ref{lem: N_x(o)=}, $N_z(o)=[j,U_x]_N=N_x(o)$.

If $o<\omega_j$, write $N_x(o)=[L_x,j]_N$.
For each $i\in[L_x,j]_N$,
\begin{equation*}
\{\ell\in[i,j)_N:o<z_\ell\}
=\{\ell\in[i,j)_N:o<x_\ell\}.
\end{equation*}
Hence
\begin{equation*}
\#\{\ell\in[i,j)_N:o<z_\ell\}<\#(o,\omega_j]_O
\end{equation*}
for each $i\in(L_x,j]_N$, and
\begin{equation*}
\#\{\ell\in[L_x,j)_N:o<z_\ell\}
=\#(o,\omega_j]_O.
\end{equation*}
Since $z_j=x_j=o$, by Lemma~\ref{lem: N_x(o)=},  $N_z(o)=[L_x,j]_N=N_x(o)$.

If $\omega_j=o$, by Lemma~\ref{lem: N_x(o)=}, $N_z(o)=\{j\}=N_x(o)$.
Together with $N_x(o)\subseteq N_y(o)$, these equalities give $N_z(o)=N_x(o)\cap N_y(o)$ 
for each $o\in O$.
By the definitions of these sets,
$O_i(z)=O_i(x)\cap O_i(y)$ for each $i\in N$.

Finally, we show that $z\in\mathcal{X}\bigl(\NE(\Gamma^{\CR},R^0)\bigr)$. 
Fix $i\in N$ and $o\in O_i(z)$.
Then $o\in O_i(x)\cap O_i(y)$.
Since $x,y\in \mathcal{X}\bigl(\NE(\Gamma^{\CR},R^0)\bigr)$, by Lemma~\ref{lem: NE},  $x_i\mathbin{R_i^0} o$ and $y_i\mathbin{R_i^0} o$.
Since $z_i\in\{x_i,y_i\}$, we have $z_i\mathbin{R_i^0} o$.
By Lemma~\ref{lem: NE},  $z\in \mathcal{X}\bigl(\NE(\Gamma^{\CR},R^0)\bigr)$.\qed
\end{prf}

\begin{prfofthm1}
\mbox{}\\
\noindent\textbf{Lattice structure.}
By \emph{strategy-proofness}, $R^0\in\NE(\Gamma^{\CR},R^0)$, so the set of equilibrium allocations is nonempty.
For each $x,y\in\mathcal{X}\bigl(\NE(\Gamma^{\CR},R^0)\bigr)$, by Propositions \ref{prop: join} and \ref{prop: meet}, $x\vee y,x\wedge y\in\mathcal{X}\bigl(\NE(\Gamma^{\CR},R^0)\bigr)$.
Thus $\bigl(\mathcal{X}\bigl(\NE(\Gamma^{\CR},R^0)\bigr),\succeq_{R^0}\!\bigr)$ is a lattice.

\medskip
\noindent\textbf{Least element.}
For each $i\in N$ and each $R'_{-i}\in\mathcal{R}^{n-1}$,
\emph{individual rationality} gives
$\CR_i(R_i^{\omega_i},R'_{-i})\mathbin{R_i^{\omega_i}}\omega_i$.
Since $\omega_i$ is the most preferred object under $R_i^{\omega_i}$,
$\CR_i(R_i^{\omega_i},R'_{-i})=\omega_i$.
In particular, $\CR(R^\omega)=\omega$.
If only $i$ changes their report to $R'_i\in\mathcal{R}$,
the same argument gives
$\CR_\ell(R'_i,R^\omega_{-i})=\omega_\ell$ for each $\ell\ne i$.
Since
$O\setminus\{\omega_\ell:\ell\in N\setminus\{i\}\}=\{\omega_i\}$, $\CR_i(R'_i,R^\omega_{-i})=\omega_i$.
Thus $R^\omega\in\NE(\Gamma^{\CR},R^0)$ and
$\omega\in\mathcal{X}\bigl(\NE(\Gamma^{\CR},R^0)\bigr)$.
For each $x\in\mathcal{X}\bigl(\NE(\Gamma^{\CR},R^0)\bigr)$ and each $i\in N$,
$\CR_i(R_i^{\omega_i},R^x_{-i})=\omega_i$, so $\omega_i\in O_i(x)$.
By Lemma~\ref{lem: NE}, $x_i\mathbin{R_i^0}\omega_i$.
Hence $\omega$ is the least element.

\medskip
\noindent\textbf{Greatest element.}
By \emph{strategy-proofness}, for each $i\in N$ and each $R'_i\in\mathcal{R}$,
$\CR_i(R^0)\mathbin{R_i^0}\CR_i(R'_i,R_{-i}^0)$. 
Hence $R^0\in\NE(\Gamma^{\CR},R^0)$ and $\CR(R^0)\in\mathcal{X}\bigl(\NE(\Gamma^{\CR},R^0)\bigr)$.
For each $x\in \mathcal{X}\bigl(\NE(\Gamma^{\CR},R^0)\bigr)$, 
the least upper bound satisfies $\CR(R^0)\vee x\in\mathcal{X}\bigl(\NE(\Gamma^{\CR},R^0)\bigr)\subseteq\mathcal{X}$ and $[\CR(R^0)\vee x]_i\mathbin{R_i^0}\CR_i(R^0)$ for  each $i\in N$.
By \emph{efficiency} under $R^0$, $\CR(R^0)\vee x=\CR(R^0)$.
Since this least upper bound also weakly Pareto dominates $x$,
$\CR(R^0)\succeq_{R^0}x$. 
Thus $\CR(R^0)$ is the greatest element. 

\medskip
\noindent\textbf{Completeness.}
Since $0<\#\mathcal{X}\bigl(\NE(\Gamma^{\CR},R^0)\bigr)<\infty$, taking least upper bounds and greatest lower bounds two elements at a time
gives both bounds for each
$\emptyset\ne \mathcal{X}'\subseteq\mathcal{X}\bigl(\NE(\Gamma^{\CR},R^0)\bigr)$.
The empty set has least upper bound $\omega$ and greatest lower bound
$\CR(R^0)$.
Hence $\bigl(\mathcal{X}\bigl(\NE(\Gamma^{\CR},R^0)\bigr),\succeq_{R^0}\bigr)$
is a complete lattice, with
\begin{equation*}
\bigvee\mathcal{X}\bigl(\NE(\Gamma^{\CR},R^0)\bigr)=\CR(R^0),\qquad
\bigwedge\mathcal{X}\bigl(\NE(\Gamma^{\CR},R^0)\bigr)=\omega.
\end{equation*}\qed
\end{prfofthm1}

\section*{Acknowledgments}
I am grateful to Ryosuke Sakai and Takehiko Yamato for their helpful comments and suggestions.
\bibliographystyle{apalike}
\bibliography{crawler_NE_LATTICE}

@article{bade2019matching,
  title={Matching with single-peaked preferences},
  author={Bade, Sophie},
  journal={Journal of Economic Theory},
  volume={180},
  pages={81--99},
  year={2019},
  publisher={Elsevier}
}

@article{tamura2022object,
  title={Object reallocation problems under single-peaked preferences: Two characterizations of the crawler},
  author={Tamura, Yuki},
  journal={International Journal of Game Theory},
  volume={51},
  number={3},
  pages={537--565},
  year={2022},
  publisher={Springer}
}

@article{bochet2020dominance,
  title     = {Dominance of truthtelling and the lattice structure
               of {Nash} equilibria},
  author    = {Bochet, Olivier and Tumennasan, Norovsambuu},
  journal   = {Journal of Economic Theory},
  volume    = {185},
  pages     = {104952},
  year      = {2020},
  publisher = {Elsevier},
  note      = {Includes an online appendix}
}

@article{arribillaga2023all,
  title={All sequential allotment rules are obviously strategy-proof},
  author={Arribillaga, R Pablo and Mass{\'o}, Jordi and Neme, Alejandro},
  journal={Theoretical Economics},
  volume={18},
  number={3},
  pages={1023--1061},
  year={2023},
  publisher={Wiley Online Library}
}

@article{shapley1974cores,
  title={On cores and indivisibility},
  author={Shapley, Lloyd and Scarf, Herbert},
  journal={Journal of mathematical economics},
  volume={1},
  number={1},
  pages={23--37},
  year={1974},
  publisher={Elsevier}
}

@article{roth1982incentive,
  title={Incentive compatibility in a market with indivisible goods},
  author={Roth, Alvin E},
  journal={Economics letters},
  volume={9},
  number={2},
  pages={127--132},
  year={1982},
  publisher={Elsevier}
}

@article{tamura2022crawler,
  title={The crawler: Three equivalence results for object (re) allocation problems when preferences are single-peaked},
  author={Tamura, Yuki and Hosseini, Hadi},
  journal={Journal of Economic Theory},
  volume={203},
  pages={105466},
  year={2022},
  publisher={Elsevier}
}

@article{ma1994strategy,
  title={Strategy-proofness and the strict core in a market with indivisibilities},
  author={Ma, Jinpeng},
  journal={International Journal of Game Theory},
  volume={23},
  number={1},
  pages={75--83},
  year={1994},
  publisher={Springer}
}

@article{barbera1997strategy,
  title={Strategy-proof allotment rules},
  author={Barber{\`a}, Salvador and Jackson, Matthew O and Neme, Alejandro},
  journal={Games and Economic Behavior},
  volume={18},
  number={1},
  pages={1--21},
  year={1997},
  publisher={Elsevier}
}

@article{li2017obviously,
  title={Obviously strategy-proof mechanisms},
  author={Li, Shengwu},
  journal={American Economic Review},
  volume={107},
  number={11},
  pages={3257--3287},
  year={2017},
  publisher={American Economic Association 2014 Broadway, Suite 305, Nashville, TN 37203}
}

@article{sprumont1991division,
  title={The division problem with single-peaked preferences: A characterization of the uniform allocation rule},
  author={Sprumont, Yves},
  journal={Econometrica},
  volume={59},
  number={2},
  pages={509--519},
  year={1991},
  publisher={JSTOR}
}

\end{document}